\newtheorem{lemma}{Lemma}
\newtheorem{definition}{Definifion}
\newtheorem{theorem}{Theorem}
\newenvironment{proof}[1][Proof]
{\textbf{#1} }{\ \rule{0.5em}{0.5em}}
\begin{document}


\title{Symmetry operators  and separation of variables in the $(2+1)$-dimensional Dirac equation with external electromagnetic field}

%
%

\title{Symmetry operators  and separation of variables in the $(2+1)$-dimensional Dirac equation with external electromagnetic field
}

\author{A.~V.~Shapovalov${}^{a,b}$}
\email{shpv@phys.tsu.ru}

\author{A.~I.~Breev${}^{a}$}
\email{breev@mail.tsu.ru}

\date{\today }

%
%

%
%




\begin{abstract}
We obtain and analyze  equations determining first-order differential symmetry operators  with matrix coefficients for the Dirac equation with an external electromagnetic potential  in a $(2+1)$-dimensional Riemann (curved) spacetime. Nonequivalent complete sets of mutually commuting symmetry operators are classified in a $(2+1)$-dimensional Minkowski (flat) space.  For each of the sets we carry out a complete separation of variables in the Dirac equation and find a corresponding electromagnetic potential permitting separation of variables.
\end{abstract}

\keywords{$(2+1)$-dimensional Dirac equation; symmetry operators; separation of variables.}

\affiliation{
$^{a}$Department of Theoretical Physics, Tomsk State University, Novosobornaya Sq.1, 634050, Tomsk, Russia\\
${}^{b}$Department of Higher Mathematics and Mathematical Physics, \\
Tomsk Polytechnic University, Lenin Ave., 30, 634034, Tomsk, Russia
}
\maketitle

\section{Introduction}
\label{introduction}

In this paper, we consider a complete separation of variables for the  Dirac equation in a $(2+1)$-dimensional spacetime  (the $(2+1)$-dimensional Dirac equation) with an external electromagnetic field in a Minkowski spacetime with the use of mutually commuting first-order differential symmetry operators with matrix coefficients.
This problem was solved for the Dirac equation in the usual $(3+1)$-dimensional  Minkowski spacetime  in the early 1970s
\cite{vnsh_ecl1973, bagr_vnsh_ecl1975}  on the basis of the theory of separation of variables in linear systems of first-order partial differential equations  (PDEs) and in second-order scalar PDEs  \cite{bagr_vnsh_ecl1975,bagr_m_sh_vnsh1973,vnsh1978}.

This advancement in theory gave rise to a systematic search for new exact solutions to  the Dirac and Klein--Gordon equations with external electromagnetic fields and to a classification of the fields permitting separation of variables \cite{bagr_m_sh_vnsh1973}.

A comprehensive review and detailed analysis of the known solutions obtained by separation of variables was carried out by Bagrov and Gitman in the book \cite{bagr_git_book2014}.

The Dirac equation in $(2+1)$ dimensions has been actively studied for the past three decades by many researchers. We do not intend here
to provide an exhaustive literature review related to physical problems where solutions of the $(2+1)$-dimensional  Dirac equation are essential and refer only to some articles to illustrate a variety of contexts where these solutions can be applied.
The topical problem in relativistic quantum theory is the self-adjoint extension of the Dirac Hamiltonian in external singular potentials.
The Dirac equation for a  magnetic solenoid field is the basis of the theory of the Aharonov--Bohm effect both in $(3+1)$ and in $(2+1)$ dimensions
\cite{gerb1989,bagr_git_tl2001,gavr_git_sm2004}.
Khalilov \cite{khal2013} considered  the Dirac equation in $(2+1)$ dimensions for a relativistic charged zero-mass fermion
in Coulomb and Aharonov-Bohm potentials in the  context of a self-adjoint extension problem.
The self-adjoint extension problem in quantum mechanics was investigated in detail by Gitman, et al. \cite{git_tyut_voron2012}.

Exact solutions of the $(2+1)$-dimensional  Dirac equation for an external  field in the presence of a minimal length were studied by Menculini et al. \cite{roy2013,roy2015}.

The Dirac equation is of interest in studying planar gravity and BTZ black holes \cite{BTZ1992,BTZ1993, BTZ1998} and also in cosmology \cite{Brevik, Nj2003}.

 Another motivation in studying the $(2+1)$-dimensional  Dirac equation in curved spacetime is that,
 although the $(2+1)$-dimensional gravity is a toy model for a regular Einstein theory in $(3+1)$ dimensions,
 it preserves some significant properties of regular gravity being mathematically simpler
 (see, e.g., \cite{SucuUnal2007}).

Finally, it should be noted that in condensed matter physics, the $(2+1)$-dimensional Dirac equation with an external electromagnetic potential is used to study theoretically  the electronic properties
of graphene (see \cite{Castro2009,Katsnelson2010,Katsnelson2012}), graphene quantum dots \cite{Gulcu2014}, etc.

Summarizing, we can see that the  development of mathematical techniques
to find new exact solutions for the $(2+1)$-dimensional Dirac equation is important in view of extending
applications of relativistic quantum equations to topical problems of quantum theory.

\par

The paper is organized as follows.  In Section 2, basic definitions regarding the $(2+1)$-dimensional Dirac equation with an external electromagnetic potential in pseudo-Riemannian spacetime are given.
In Section 3, we obtain and  analyze in detail the equations determining a first-order symmetry operator with matrix coefficients for the Dirac equation in the $(2+1)$-dimensional Riemann spacetime.
In Section 4, the focus is on classifying complete sets of symmetry operators in $(2+1)$-dimensional Minkowski space.
Separation of variables in the $(2+1)$-dimensional Dirac equation is carried out in Section 5 with the use of the complete sets of symmetry operators in $(2+1)$-dimensional Minkowski space. The separable coordinates and the corresponding electromagnetic potentials permitting separation of variables are presented.

\section{The Dirac equation in  $(2+1)$-dimensional Riemann spacetime}
\label{sect1}

\subsection{ A $(2+1)$-dimensional Riemann spacetime}
\label{sect1-1}
We introduce the notation used in the Dirac equation in a curved spacetime that is described as a $(2+1)$-dimensional pseudo-Riemannian manifold $\mathcal{M}(g)$. A pseudo-Riemannian metric $(g_{\mu\nu})$ on the manifold $\mathcal{M}$
is given by its covariant components $g_{\mu\nu}(x)$  in  local coordinates $x=(x^\mu)=(x^0,x^1,x^2)$ on $\mathcal{M}$;  $\mu,\nu,\alpha, \dots  =0,1,2$. The contravariant components of the metric tensor  are $g^{\mu\nu}$, $g^{\mu\nu}g_{\nu\alpha}=\delta^\mu_\alpha $ \footnote{The Einstein rule of summation is used here and below.}. Here $\delta^\mu_\alpha$ is the the Kronecker delta ($\delta^\mu_\mu =1$ and 0 otherwise).

The Levi-Civita connection on the tangent bundle $T\mathcal{M}$ is described by the Christoffel symbols
\begin{equation}
	\label{levi-chivita1}
	\Gamma^\mu_{\nu\alpha}=\frac{1}{2}g^{\mu\beta}\left(
		\partial_\alpha g_{\beta\nu} +
		\partial_\nu g_{\beta\alpha} -
		\partial_\beta g_{\nu\alpha} \right),\quad
		\partial_\mu = \frac{\partial}{\partial x^\mu}.
\end{equation}
The Riemann curvature tensor is  given in terms of the Christoffel symbols as
\begin{eqnarray}
	\label{riemanntensor}
	-R^\sigma_{. \alpha \nu \mu} =
	\partial_\nu \Gamma^\sigma_{\alpha \mu } -
	\partial_\mu \Gamma^\sigma_{\alpha \nu } +
	\Gamma^\sigma_{\beta \nu }\Gamma^\beta_{\alpha \mu } -
	\Gamma^\sigma_{\beta \mu }\Gamma^\beta_{\alpha \nu }
\end{eqnarray}
(we follow here the notation of \cite{DubrNovFom1992}). The  Ricci tensor $R_{\mu\nu}$ and curvature $R$ are
\begin{eqnarray}
	\label{riccitensor}
	R_{\nu\mu}=R^\alpha_{. \nu \alpha \mu} ,\quad R = R^\mu_\mu.
\end{eqnarray}

The curvature tensor (\ref{riemanntensor}) in the three-dimensional manifold $\mathcal{M}$
is uniquely determined through the metric and the Ricci tensor (\ref{riccitensor}) as
\begin{gather}\nonumber
	R_{\nu\mu\alpha\beta} = R_{\nu\alpha}g_{\mu\beta} - R_{\nu\beta}g_{\mu\alpha} +
				R_{\mu\beta}g_{\nu\alpha} - R_{\mu\alpha}g_{\nu\beta} +
				\frac{R}{2}\left( g_{\nu\beta}g_{\mu\alpha} - g_{\nu\alpha}g_{\mu\beta} \right).
\end{gather}
Covariant differentiation of a contravariant vector $V^\mu$ and a covariant vector $V_\mu$ yields
$\nabla_\nu V^\mu=V^\mu_{;\nu}=V^\mu,_\nu+\Gamma^\mu_{\nu\alpha}V^\alpha $ and $\nabla_\nu V_\mu=V_{\mu;\nu}=V_{\mu},_{\nu}-\Gamma^\alpha_{\nu\mu}V_\alpha $, respectively. Here the semicolon denotes the covariant derivative with respect to coordinate indices, and the comma implies a partial derivative.
The connection $\Gamma^\mu_{\nu\alpha}$ is metric compatible, i.e. $g_{\alpha\beta;\mu}=0$.

To define spin connection on the $(2+1)$-dimensional spacetime, we need  local frame fields $e_\mu^a(x)$
that diagonalize the metric:
 \begin{eqnarray}
\label{triada1}
g_{\mu\nu}(x)=e_\mu^a(x)e_\nu^b(x)\eta_{ab},
\end{eqnarray}
where $\eta_{ab}$ is the Minkowski metric, $(\eta_{ab})={\mbox{diag} (1,-1,-1) }$. Here, Latin letters denote the local
frame indices, $a,b=0,1,2$. Also, $e^{\mu a}(x)=g^{\mu\nu} e_{\nu}^a(x)$, $e^{\mu}_ a(x)=g^{\mu\nu} e_{\nu}^b(x)\eta_{ab}$,
$\eta^{ac}\eta_{cb}=\delta^a_b$.

In $(3+1)$-general relativity, a frame field  is also called a tetrad or a vierbein field;  in $(2+1)$ spacetime we use the term triad for $e_\mu^a(x)$, and then $a,b, c $ are triad indices.

In what follows, we will need the Levi--Civita antisymmetric tensor $e_{\mu\nu\alpha}(x)$ on the $(2+1)$-dimensional manifold $(\mathcal{M},g)$, which is defined as
 \begin{equation}
	\label{l-c-tensor1}
	e_{\mu\nu\alpha}(x) = \operatorname{det}\left( e^a_\mu(x) \right) \epsilon_{\mu\nu\alpha}.
\end{equation}
By $\epsilon_{\mu\nu\alpha}$ we denote the Levi--Civita antisymmetric symbol, which is defined as $\epsilon_{012}=1$.
We also note that
 \begin{eqnarray}
	\label{l-c-tensor3}
	e^{\alpha\beta\gamma}e_{\alpha\beta\gamma} = 6,\quad
	e^{\alpha\beta\gamma}e_{\mu\beta\gamma} = 2 \delta^\alpha_\mu,\quad
	e^{\alpha\mu\nu}e_{\alpha\beta\gamma} = \delta^\mu_\beta \delta^\nu_\gamma - \delta^\mu_\gamma \delta^\nu_\beta.
\end{eqnarray}

\subsection{ The Dirac equation}
\label{sect1-2}

The Dirac equation in $(2 + 1)$-dimensional spacetime is determined by introducing the corresponding Dirac matrices, the spin connection, and the generalized momentum operator.

\subsubsection*{The Dirac matrices}

We follow the notation of \cite{bagr_git_book2014} for $(2+1)$  $\gamma$- matrices:
\begin{eqnarray}
\label{gammamatr1}
\gamma^\mu(x)=e^\mu_a(x)\hat \gamma^a,
\end{eqnarray}
where
\begin{eqnarray}
\label{gammamatr2}
\hat\gamma^0=\sigma_3, \quad \hat\gamma^1=i s \sigma_1, \quad \hat\gamma^2=\sigma_2,
\end{eqnarray}
and $\sigma_1,\sigma_2,\sigma_3$ are the Pauli spin matrices \footnote{
The Pauli matrices
\begin{eqnarray*}
\label{gammamatr3}
\sigma_1=
\begin{pmatrix}
  0 & 1 \\
  1 & 0 \\
\end{pmatrix},
\quad
\sigma_2=
\begin{pmatrix}
  0 & -i \\
  i & 0 \\
\end{pmatrix},
\quad
\sigma_3=
\begin{pmatrix}
  1 & 0 \\
  0 & -1 \\
\end{pmatrix}
\end{eqnarray*}
are Hermitian, traceless,  and possessing the properties:
\begin{gather}\nonumber
	\sigma_1 \sigma_2 =- \sigma_2\sigma_1 = i  \sigma_3,\quad
	\quad
	\sigma_2 \sigma_3 =- \sigma_3\sigma_2 = i\sigma_1,\quad
	\quad
	\sigma_3 \sigma_1 =- \sigma_1\sigma_3 = i\sigma_2.
\end{gather}
}.
 The four matrices
 \begin{eqnarray}
\label{gammamatr5}
\gamma^\mu(x), \,\, I
\end{eqnarray}
 form a basis for the set of $2\times 2$ matrices, where $I$ is the identity matrix.

The algebraic properties of  $(2+1)$ gamma-matrices (\ref{gammamatr1}), (\ref{gammamatr2})  are
\begin{equation}
	\label{gammamatr6-1}
	\gamma^\alpha\gamma^\beta+\gamma^\beta\gamma^\alpha=2 g^{\alpha\beta}, \quad
	\gamma^\alpha \gamma^\mu = g^{\alpha \mu}-i s e^{\alpha\mu\sigma}\gamma_\sigma,\quad
	[\gamma^\mu, \gamma^\nu] = - 2 i s e^{\mu\nu\sigma}\gamma_\sigma.
\end{equation}
where $[A,B]=AB-BA $ is the commutator of $A$ and $B$.

\subsubsection*{The spinor connection}

The spinor connection $\Gamma_\mu$  provides covariant differentiation of  spinors: if $\psi$ is a spinor, then
$\nabla_\mu\psi+\Gamma_\mu\psi=\psi_{;\mu}+\Gamma_\mu\psi$ is also a spinor. The matrices $\Gamma_\mu$ are assumed to be traceless:
\begin{equation}
\label{spinconn1}
\mbox{tr}(\Gamma_\mu)=0.
\end{equation}
For the spinor-covariant derivative of the Dirac matrices $\gamma^\mu$  we have
\begin{equation}
\label{spinconn2}
[\nabla_\alpha+\Gamma_\alpha,\gamma^\mu]=0 \quad \mbox{or} \quad  \gamma^\mu_{;\alpha}=-[\Gamma_\alpha , \gamma^\mu].
\end{equation}
It is easy to verify that
\begin{equation}
\label{spinconn3}
\Gamma_\nu=-\displaystyle\frac{1}{4}\gamma^\alpha_{;\nu}\gamma_\alpha.
\end{equation}
Indeed, since $\Gamma_\nu$ are traceless (\ref{spinconn1}),
we can expand $\Gamma_\nu$ in terms of the basis (\ref{gammamatr5}) as $\Gamma_\nu=a_\nu^\alpha\gamma_\alpha$
with  constant expansion coefficients $a_\nu^\alpha$. From (\ref{spinconn2}) it follows that
$\gamma^\mu_{;\nu}+a^\alpha_\nu(\gamma_\alpha\gamma^\mu-\gamma^\mu\gamma_\alpha)=0$. Multiplying this equation by  $\gamma_\mu$ and
summing over $\mu$, we get
$\gamma^\mu_{;\nu}\gamma_\mu+4 a^\alpha_\nu\gamma_\alpha=0$ and consequently (\ref{spinconn3}).

\subsubsection*{ The generalized momentum operator}

The components $\mathcal{P}_\mu$ of the generalized momentum operator are defined as
\begin{equation}
\label{momentum1}
\mathcal{P}_\nu=p_\nu-A_\nu , \quad p_\nu=i(\nabla_\nu+\Gamma_\nu),
\end{equation}
where $A_\nu$ are the components of the vector potential of the  external electromagnetic field. The components of the electromagnetic tensor are $F_{\nu\mu}=\nabla_\nu A_\mu - \nabla_\mu A_\nu=A_{\mu, \nu}-A_{\nu, \mu}$.

\begin{theorem}
	The commutators of $\mathcal{P}_\nu$ read
	 \begin{equation}\nonumber	
[\mathcal{P}_\nu,\mathcal{P}_\mu] = -[\nabla_\nu ,\nabla_\mu] -\displaystyle\frac{i}{4}\, s \, R_{\nu\mu\, \, . \,\, .\,\,}^{\, . \,\, . \,\,\,\  \alpha\beta} e_{\alpha\beta\sigma}\gamma^\sigma-iF_{\nu\mu}.
\end{equation}
	\label{}
\end{theorem}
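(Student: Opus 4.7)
The strategy is to split $\mathcal{P}_\nu=p_\nu-A_\nu$ and treat the electromagnetic, spin-connection, and Christoffel contributions separately. Expanding bilinearly, $[\mathcal{P}_\nu,\mathcal{P}_\mu]=[p_\nu,p_\mu]-[p_\nu,A_\mu]+[p_\mu,A_\nu]$ since the $A_\mu$'s commute. Because $A_\mu$ is a scalar function of $x$ that commutes with the matrix $\Gamma_\nu$, the cross terms collapse to $[p_\nu,A_\mu]\psi=i(\partial_\nu A_\mu)\psi$, and these combine into exactly $-iF_{\nu\mu}$ via $F_{\nu\mu}=A_{\mu,\nu}-A_{\nu,\mu}$.

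For the remaining piece, I expand $[p_\nu,p_\mu]=-[\nabla_\nu+\Gamma_\nu,\nabla_\mu+\Gamma_\mu]$. The cross terms $-[\nabla_\nu,\Gamma_\mu]+[\nabla_\mu,\Gamma_\nu]$ reduce on a spinor to $-(\partial_\nu\Gamma_\mu-\partial_\mu\Gamma_\nu)$, since any Christoffel correction to $\nabla\Gamma$ is symmetric in $\nu,\mu$ and drops out upon antisymmetrization. This yields
\begin{equation*}
[p_\nu,p_\mu]=-[\nabla_\nu,\nabla_\mu]-\mathcal{R}_{\nu\mu},\qquad \mathcal{R}_{\nu\mu}:=\partial_\nu\Gamma_\mu-\partial_\mu\Gamma_\nu+[\Gamma_\nu,\Gamma_\mu],
\end{equation*}
which already accounts for the $-[\nabla_\nu,\nabla_\mu]$ term of the statement and reduces everything to the algebraic identification $\mathcal{R}_{\nu\mu}=\tfrac{is}{4}R_{\nu\mu}{}^{\alpha\beta}e_{\alpha\beta\sigma}\gamma^\sigma$.

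To pin down $\mathcal{R}_{\nu\mu}$ I use covariant constancy of the gammas. From (\ref{spinconn2}), $\nabla_\mu\gamma^\alpha+[\Gamma_\mu,\gamma^\alpha]=0$, i.e.\ the total derivative $(\nabla_\mu+[\Gamma_\mu,\cdot\,])\gamma^\alpha=0$. Differentiating a second time and antisymmetrizing gives $0=[\nabla_\nu,\nabla_\mu]\gamma^\alpha+[\mathcal{R}_{\nu\mu},\gamma^\alpha]$. With the paper's Riemann convention (\ref{riemanntensor}) one verifies (parallel to the vector case) that $[\nabla_\nu,\nabla_\mu]V^\alpha=-R^\alpha{}_{\beta\nu\mu}V^\beta$, so the key relation becomes
\begin{equation*}
[\mathcal{R}_{\nu\mu},\gamma^\alpha]=R^\alpha{}_{\beta\nu\mu}\gamma^\beta.
\end{equation*}

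Finally, $\mathcal{R}_{\nu\mu}$ is traceless by (\ref{spinconn1}) (both $\partial_\nu\Gamma_\mu$ and commutators of matrices have vanishing trace), so in $(2+1)$ dimensions it expands in the gamma basis as $\mathcal{R}_{\nu\mu}=c_{\nu\mu,\sigma}\gamma^\sigma$. Substituting and using $[\gamma_\sigma,\gamma^\alpha]=-2is\,e_\sigma{}^{\alpha}{}_\beta\gamma^\beta$ from (\ref{gammamatr6-1}) turns the above into a linear relation between $c$ and $R^\alpha{}_{\beta\nu\mu}$; I invert it by contracting with $e^{\alpha\beta\rho}$ and applying $e^{\alpha\beta\sigma}e_{\alpha\beta\rho}=2\delta^\sigma_\rho$ from (\ref{l-c-tensor3}), together with $s^2=1$ (which follows from $(\hat\gamma^1)^2=g^{11}=-1$), to obtain $c_{\nu\mu,\sigma}=\tfrac{is}{4}R_{\nu\mu}{}^{\alpha\beta}e_{\alpha\beta\sigma}$. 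The main obstacle is sign bookkeeping: the nonstandard sign in (\ref{riemanntensor}) must be threaded consistently through the action of $[\nabla,\nabla]$ on $\gamma^\alpha$ and through the triple-index $e$-tensor contractions, since a single sign slip flips the overall coefficient.
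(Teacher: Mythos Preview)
Your proof is correct and follows essentially the same route as the paper: both arguments split off the electromagnetic piece to get $-iF_{\nu\mu}$, isolate the spin-curvature combination $\mathcal{R}_{\nu\mu}=\partial_\nu\Gamma_\mu-\partial_\mu\Gamma_\nu+[\Gamma_\nu,\Gamma_\mu]$ (the paper's $-\hat G_{\nu\mu}$), use covariant constancy (\ref{spinconn2}) of the gammas to obtain $[\mathcal{R}_{\nu\mu},\gamma^\alpha]=R^\alpha{}_{\beta\nu\mu}\gamma^\beta$, expand the traceless $\mathcal{R}_{\nu\mu}$ in the $\gamma$-basis, and invert via the $e$-tensor identities (\ref{l-c-tensor3}). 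The only cosmetic difference is that the paper reaches the key commutator relation by invoking $[[\mathcal{P}_\nu,\mathcal{P}_\mu],\gamma^\alpha]=0$ directly, whereas you differentiate (\ref{spinconn2}) twice and antisymmetrize; these are equivalent.
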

\begin{proof}
	\begin{equation}
\label{elmag2}
[\mathcal{P}_\nu,\mathcal{P}_\mu] = -[\nabla_\nu ,\nabla_\mu] -(\Gamma_{\mu ;\nu}-\Gamma_{\nu ;\mu} ) -[\Gamma_\nu, \Gamma_\mu ]-iF_{\nu\mu},
\end{equation}
where
\begin{equation*}
\Gamma_{\mu ;\nu}-\Gamma_{\nu ;\mu} = -\displaystyle\frac{1}{4}\gamma^\alpha R^\beta_{.\,\alpha \nu\mu}.
\end{equation*}
Now we define an operator $\hat G_{\nu\mu}=-(\Gamma_{\mu ;\nu}-\Gamma_{\nu ;\mu}+ [\Gamma_\nu,\Gamma_\mu])$ and rewrite equation (\ref{elmag2}) as
\begin{equation}
\label{elmag3}
[\mathcal{P}_\nu,\mathcal{P}_\mu] = -[\nabla_\nu ,\nabla_\mu] +\hat G_{\nu\mu}-iF_{\nu\mu}.
\end{equation}
Then it is easy to see that the tracelesness property of $\Gamma_\mu$ in (\ref{spinconn1})
immediately leads to
\begin{equation}
\label{tracegbig1}
\mbox{tr}(\hat G_{\mu\nu})=0.
\end{equation}
Note that  $[[\mathcal{P}_\nu,\mathcal{P}_\mu],\gamma^\alpha]=0$ is the obvious consequence of (\ref{spinconn2}). Hence, we have
 \begin{equation}
\label{gammabig3}
-[[\nabla_\nu,\nabla_\mu],\gamma^\alpha]+[\hat G_{\nu\mu},\gamma^\alpha]=0.
\end{equation}
 On the other hand,
\begin{equation*}
[[\nabla_\nu,\nabla_\mu],\gamma^\alpha]=\gamma^\alpha_{;\mu\nu}-\gamma^\alpha_{;\nu\mu}=
R^{\beta\alpha}_{.\, .\, \nu\mu}\gamma_\beta.
\end{equation*}
 Then (\ref{gammabig3}) yields
 \begin{equation}
\label{gammabig5}
R_{\mu\nu\beta\, .}^{\,\,\,\,\,\,\,\ \alpha }\gamma^\beta +[\hat G_{\nu\mu}, \gamma^\alpha]=0.
\end{equation}
Substituting in (\ref{gammabig5})  the expansion formula
 \begin{equation}
\label{gammabig6}
\hat G_{\nu\mu}=a_{\nu\mu\beta}\gamma^\beta
\end{equation}
for $\hat G_{\nu\mu}$ in terms of the basis (\ref{gammamatr5}) with the use of (\ref{tracegbig1}), we get
 \begin{equation}
\label{gammabig7}
R_{\mu\nu\beta\, .}^{\,\,\,\,\,\,\,\ \alpha }\gamma^\beta +a_{\nu\mu\beta}[\gamma^\beta,\gamma^\alpha]=0.
\end{equation}
Then, applying the commutation relation (\ref{gammamatr6-1}) to (\ref{gammabig7}), we have
  \begin{equation}
\label{gammabig8}
R_{\mu\nu\beta\, .}^{\,\,\,\,\,\,\,\ \alpha }\gamma^\beta - 2is\, a_{\nu\mu\beta}\, e^{\beta\alpha\sigma} \gamma_\sigma=0 ,
\end{equation}
and using (\ref{l-c-tensor3}) we find that
\begin{equation*}
a_{\nu\mu\alpha}=-\displaystyle\frac{i}{4}\, s \, e_{\alpha \sigma\beta} R^{\sigma\beta\, \, . \,\, .\,\,}_{\, . \,\, . \,\,\,\  \nu\mu } \, .
\end{equation*}

Finally, substituting (\ref{gammabig8}) and (\ref{gammabig6}) in equation (\ref{elmag3}), we obtain (\ref{elmag2}).
\end{proof}

Now we can write the Dirac equation for a Dirac particle of mass $m$
in the
 $(2+1)$-dimensional pseudo-Riemannian manifold  $\mathcal{M}$ with external electromagnetic potential $A_\nu$ as follows:
  \begin{equation}
\label{dirac1}
H\psi=0, \quad H=\gamma^\nu\mathcal{P}_\nu-m .
\end{equation}

\section{The determining equations}
\label{sect2}

Separation of variables in the Dirac equation (\ref{dirac1}) involves a matrix-valued first-order differential symmetry operator $X$,
\begin{equation}
	\label{symm-oper1}
	X=X^{\nu}\mathcal{P}_\nu+\chi ,
\end{equation}
mapping each solution of (\ref{dirac1}) in another  solution  of this equation. Here  $X^{\nu}$ and $\chi$ are matrix functions of $x$.

The determining equation for the symmetry operator $X$ can be written as
\begin{equation}
	\label{determ-eq1}
	[X,H]=\Psi H.
\end{equation}
Here $\Psi$ is a Lagrange operator multiplier having the form
\begin{equation}
	\label{determ-eq1-1}
	\Psi=\Psi^\nu \mathcal{P}_\nu + \bar\Psi ,
\end{equation}
where $\Psi^\nu$ and $\bar\Psi $ are matrix functions of $x$.

The left and right sides of  equation (\ref{determ-eq1}) act on smooth scalar functions of $x$ from the general domain of definition of the operators $X$ and $H$.

Operators (\ref{symm-oper1}) that satisfy equations (\ref{determ-eq1}) form a Lie algebra $\mathfrak{g}$. It is clear, that any operator $Y = R  H$ is a solution of (\ref{determ-eq1}) with $\Psi = [R,H]$, where $R$ is a linear differential operator. The set of operators $Y$ forms an ideal $\mathfrak{h}$ in the Lie algebra $\mathfrak{g}$:
\begin{equation}\nonumber
	[Y, X] = \left( [R,X] - R \Psi \right) H.
	\label{}
\end{equation}
The operators $Y$ belonging to  $\mathfrak {h}$ do not contain any information about solutions of equation (\ref{dirac1}). These symmetry operators will be called trivial. We are interested in the elements of the quotient algebra $\mathfrak {l}=\mathfrak {g}/\mathfrak {h}$, which we call non-trivial symmetry operators.

Substituting  expressions (\ref{symm-oper1}) and (\ref{determ-eq1-1}) in (\ref{determ-eq1}) and equating
coefficients of the same  powers of $\mathcal{P}_\mu$, we come to the following result:
\begin{lemma}\label{lemm01}
	 The determining equation (\ref{determ-eq1}) is equivalent to the following system of equations for $X^{\nu}$, $\chi$ and $\Psi^\mu$, $\bar\Psi$:
	\begin{gather}
		 \label{commut-deter-eq3}
		 [\gamma^\mu, X^\nu]+[\gamma^\nu, X^\mu] =\Psi^\nu\gamma^\mu+\Psi^\mu\gamma^\nu, \\
		 \label{commut-deter-eq4}
		 [\gamma^\mu, \chi]+\gamma^\nu [\mathcal{P}_\nu, X^\mu] =\bar\Psi \gamma^\mu + m\Psi^\mu, \\
		 \label{commut-deter-eq5}
		 \gamma^\nu[\mathcal{P}_\nu, \chi]+\displaystyle\frac{1}{2}\{\gamma^\nu, X^\mu \}_{+}[\mathcal{P}_\nu, \mathcal{P}_\mu]=
		\displaystyle\frac{1}{2}\Psi^\mu\gamma^\nu[\mathcal{P}_\nu, \mathcal{P}_\mu] + m\bar\Psi.
\end{gather}
\end{lemma}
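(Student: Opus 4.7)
The plan is direct substitution into the determining equation followed by separating the resulting operator identity according to its order in the non-commuting operators $\mathcal{P}_\nu$. The key structural inputs are the preceding theorem, which tells me that $[\mathcal{P}_\nu,\mathcal{P}_\mu]$ is a multiplication (zeroth-order) operator, and the identity $[\mathcal{P}_\nu,\gamma^{\mu}]=0$, an immediate consequence of (\ref{spinconn2}) together with the fact that $A_\nu$ is scalar. These two facts ensure that after every $\mathcal{P}$ has been pushed to the right of its matrix coefficient, each side of $[X,H]=\Psi H$ takes the canonical form ``matrix coefficient times a symmetric product of $\mathcal{P}$'s, plus strictly lower-order terms'', and therefore the coefficients at each order can be compared independently.

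First I substitute $X=X^{\nu}\mathcal{P}_\nu+\chi$, $H=\gamma^{\mu}\mathcal{P}_\mu-m$ and $\Psi=\Psi^{\nu}\mathcal{P}_\nu+\bar\Psi$ and expand using bilinearity of the commutator and the basic move $A\mathcal{P}_\mu=\mathcal{P}_\mu A-[\mathcal{P}_\mu,A]$. Because $[\mathcal{P}_\nu,\gamma^{\mu}]=0$, only the reorderings involving $X^{\nu}$ and $\chi$ generate correction terms. Both sides are thereby brought to the form $C^{\nu\mu}\mathcal{P}_\nu\mathcal{P}_\mu+D^{\mu}\mathcal{P}_\mu+E$. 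I decompose each $C^{\nu\mu}$ into its $(\nu,\mu)$-symmetric and antisymmetric parts: the symmetric part multiplies the genuinely second-order operator $\tfrac12\{\mathcal{P}_\nu,\mathcal{P}_\mu\}$, while the antisymmetric part multiplies $\tfrac12[\mathcal{P}_\nu,\mathcal{P}_\mu]$, which is of zeroth order by the preceding theorem and is therefore reabsorbed into $E$.

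Matching coefficients of each order then produces the three stated identities. The symmetric part of the quadratic coefficient on the left comes from $X^{\nu}\gamma^{\mu}-\gamma^{\nu}X^{\mu}$ and on the right from $\Psi^{\nu}\gamma^{\mu}$; after symmetrization in $(\nu,\mu)$ these become $[\gamma^{\mu},X^{\nu}]+[\gamma^{\nu},X^{\mu}]$ and $\Psi^{\nu}\gamma^{\mu}+\Psi^{\mu}\gamma^{\nu}$, giving (\ref{commut-deter-eq3}). The coefficients of $\mathcal{P}_\mu$ collect $[\gamma^{\mu},\chi]$ from $[\chi,\gamma^{\mu}\mathcal{P}_\mu]$ and $\gamma^{\nu}[\mathcal{P}_\nu,X^{\mu}]$ from the reordering in $\gamma^{\nu}\mathcal{P}_\nu X^{\mu}\mathcal{P}_\mu$ on the left, against $\bar\Psi\gamma^{\mu}$ and $m\Psi^{\mu}$ on the right (the latter arising from the $-m$ in $H$ acting on $\Psi^{\mu}\mathcal{P}_\mu$), producing (\ref{commut-deter-eq4}). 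Finally, the zeroth-order part collects $\gamma^{\nu}[\mathcal{P}_\nu,\chi]$ together with the residue of the antisymmetric piece of $X^{\nu}\gamma^{\mu}-\gamma^{\nu}X^{\mu}$; a single relabeling of dummy indices using the antisymmetry of $[\mathcal{P}_\nu,\mathcal{P}_\mu]$ converts that residue into $\tfrac12\{\gamma^{\nu},X^{\mu}\}_{+}[\mathcal{P}_\nu,\mathcal{P}_\mu]$. The analogous residue on the right is $\tfrac12\Psi^{\mu}\gamma^{\nu}[\mathcal{P}_\nu,\mathcal{P}_\mu]$, and with the $m\bar\Psi$ contribution from $-m\bar\Psi$ the equation closes as (\ref{commut-deter-eq5}). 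The converse direction is automatic: reassembling the three identities by inserting the appropriate $\mathcal{P}$-operators rebuilds $[X,H]-\Psi H=0$.

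The only delicate step is the bookkeeping around the antisymmetric pieces of the quadratic coefficients. They look like second-order data but collapse to zeroth order once contracted with $[\mathcal{P}_\nu,\mathcal{P}_\mu]$, and one must keep track of how the anticommutator $\{\gamma^{\nu},X^{\mu}\}_{+}$ appears from the $(\nu,\mu)$-antisymmetrization of $X^{\nu}\gamma^{\mu}-\gamma^{\nu}X^{\mu}$ after a dummy-index swap. Everything else is routine index algebra.
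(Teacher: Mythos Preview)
Your approach is correct and coincides with the paper's. The paper does not spell out a proof of this lemma beyond the single sentence ``Substituting expressions (\ref{symm-oper1}) and (\ref{determ-eq1-1}) in (\ref{determ-eq1}) and equating coefficients of the same powers of $\mathcal{P}_\mu$, we come to the following result,'' and your write-up is precisely an elaboration of that procedure, including the point the paper leaves implicit: that the antisymmetric part of the quadratic coefficient collapses to zeroth order because $[\mathcal{P}_\nu,\mathcal{P}_\mu]$ is a multiplication operator (Theorem~1), which is what makes the order-by-order matching well defined.
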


Note that Lemma \ref{lemm01} is true not only for the $(2+1)$ Dirac equation
(\ref{dirac1})  but also for the $(3+1)$ one.

Expand the matrix functions $X^{\nu}$, $\chi$, $\Psi^\nu$, $\bar\Psi$ in terms of the basis (\ref{gammamatr5}):
\begin{gather}
	\label{symm-oper2}
	X^{\nu}=\xi^\nu +\gamma^\alpha X^{\nu}_{.\,\alpha} , \quad \chi= \varphi+\varphi^\alpha \gamma_\alpha,\\
	\label{PsiTwo}
	\Psi^\nu = \Psi^{\nu}_{.\,\alpha}\gamma^\alpha+\bar\Psi^\nu,\quad	
	\bar\Psi = \bar\Psi^{\nu}_{0}\, \gamma_\nu+\bar\Psi_{0},
\end{gather}
where
\begin{equation}
	\xi^\nu, \quad X^{\nu}_{.\,\alpha},\quad  \varphi,\quad  \varphi^\alpha,\quad \Psi^{\nu}_{.\,\alpha},\quad \bar\Psi^\nu,\quad \bar\Psi^{\nu}_{0},\quad \bar\Psi_{0}
	\label{FullFunDer}
\end{equation}
are smooth scalar functions of $x$.

Equations (\ref{commut-deter-eq3}) --- (\ref{commut-deter-eq5}) for the matrix coefficients $X^{\nu}$, $\chi$ and  $\Psi^\nu$, $\bar\Psi$ of the  symmetry operator (\ref{symm-oper1}) and of the Lagrange multiplier, respectively,
 lead to  the corresponding equations for scalar functions  (\ref{FullFunDer})
which are given below in terms of  the following lemmas.
\begin{lemma}\label{lemm02}
	Equation (\ref{commut-deter-eq3}) gives
	\begin{gather}
		\label{de01}
		\Psi^{\mu\nu} = 2 X^{\mu\nu} - \frac{2}{3}\operatorname{Sp}(X)g^{\mu\nu},\quad		
		\bar\Psi^{\nu}=0,\\
		\label{de02x}
		X^{\mu\nu} + X^{\nu\mu} = \frac{2}{3}\operatorname{Sp}(X)g^{\mu\nu},\quad
		\operatorname{Sp}(X) = X^\mu_{\cdot\mu}.
	\end{gather}
\end{lemma}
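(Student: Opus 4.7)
The approach is to expand the matrix coefficients in the basis $\{I,\gamma^\mu\}$ of (\ref{gammamatr5}), reduce every product and commutator of $\gamma$'s by the identities (\ref{gammamatr6-1}), and match the independent components on the two sides of (\ref{commut-deter-eq3}).

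After substituting (\ref{symm-oper2}) and (\ref{PsiTwo}), the $\xi^\nu$-pieces drop out of the left-hand side, because scalars commute with $\gamma^\mu$, leaving $-2is\bigl(e^{\mu\alpha\sigma}X^\nu_{.\alpha}+e^{\nu\alpha\sigma}X^\mu_{.\alpha}\bigr)\gamma_\sigma$. On the right-hand side the identity $\gamma^\alpha\gamma^\mu=g^{\alpha\mu}-ise^{\alpha\mu\sigma}\gamma_\sigma$ produces a scalar piece and a $\gamma_\sigma$-piece. Matching the scalar ($I$) components gives immediately $\Psi^{\mu\nu}+\Psi^{\nu\mu}=0$, hence also $\operatorname{Sp}(\Psi)=0$. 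Matching the $\gamma_\sigma$-components (after writing $\gamma^\mu=g^{\mu\sigma}\gamma_\sigma$) leaves a single tensor equation of the form
\[
is\bigl(T^\nu_{.\alpha}e^{\mu\alpha\sigma}+T^\mu_{.\alpha}e^{\nu\alpha\sigma}\bigr)+\bar\Psi^\nu g^{\mu\sigma}+\bar\Psi^\mu g^{\nu\sigma}=0,
\]
where $T^\mu_{.\alpha}\equiv 2X^\mu_{.\alpha}+\Psi^\mu_{.\alpha}$.

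I would extract the lemma from this by two contractions. Contracting with $e_{\beta\nu\sigma}$ and using the identities of (\ref{l-c-tensor3}) in the form $e^{\sigma\mu\alpha}e_{\sigma\beta\nu}=\delta^\mu_\beta\delta^\alpha_\nu-\delta^\mu_\nu\delta^\alpha_\beta$ reduces the tensor equation to
\[
3is\,T^{\mu\beta}-2is\operatorname{Sp}(X)g^{\mu\beta}=\bar\Psi^\nu e^\beta{}_\nu{}^\mu.
\]
Its symmetric part in $(\mu,\beta)$ has vanishing right-hand side, since $e^{\beta\alpha\mu}$ is antisymmetric under $\beta\leftrightarrow\mu$, producing $T^{(\mu\nu)}=\tfrac{2}{3}\operatorname{Sp}(X)g^{\mu\nu}$; combined with $\Psi^{(\mu\nu)}=0$ this is the constraint (\ref{de02x}). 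A second contraction of the original tensor equation with $g_{\mu\sigma}$ expresses $\bar\Psi^\nu$ algebraically in terms of $T^{[\mu\beta]}$. Comparing this with the antisymmetric part of the two-index relation above, and evaluating the nested Levi-Civita product $e^{\nu\alpha}{}_\rho e^\beta{}_\nu{}^\mu$ by the generalized Kronecker-delta identity, forces $T^{[\mu\beta]}=0$. Hence $\bar\Psi^\nu=0$, and $\Psi^{\mu\nu}$ is then determined by $T^{\mu\nu}-2X^{\mu\nu}$, giving (\ref{de01}).

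The main obstacle is computational rather than conceptual: keeping index positions straight through the repeated Levi-Civita contractions, and choosing those contractions so that the two separate pieces of information, namely the symmetry constraint on $X^{\mu\nu}$ and the vanishing of $\bar\Psi^\nu$, are not entangled in a single redundant identity. Using cyclic symmetry of $e^{\mu\nu\sigma}$ to bring every nested product into one of the shapes covered by (\ref{l-c-tensor3}) is the key technical step.
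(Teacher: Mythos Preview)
Your argument is sound and reaches the same conclusions, but the route differs from the paper's. Both begin by expanding in the basis $\{I,\gamma_\sigma\}$ and reading off $\Psi^{(\mu\nu)}=0$ from the scalar component. From that point the paper proceeds \emph{component-wise}: it sets $\mu=\nu$ (no summation) in the remaining $\gamma$-equation, which in three dimensions leaves a relation with the single free index $\sigma$, and then extracts $\bar\Psi^\mu=0$ and the off-diagonal relation $\Psi^\mu_{.\alpha}=2X^\mu_{.\alpha}$ simply by distinguishing the cases $\sigma=\mu$ and $\sigma\neq\mu$. No Levi--Civita contractions are needed at all; the low dimension makes direct inspection of index values sufficient, and (\ref{de01})--(\ref{de02x}) then follow by substituting back into the antisymmetry condition $\Psi^{\mu\nu}=-\Psi^{\nu\mu}$. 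Your approach, by contrast, stays manifestly covariant throughout, pulling out the same information via two independent contractions (with $e_{\beta\nu\sigma}$ and with $g_{\mu\sigma}$) and a symmetric/antisymmetric split. This is tensorially cleaner and would transplant to other settings more readily, at the price of the nested Levi--Civita bookkeeping you correctly flag as the main technical burden. One small thing worth double-checking when you write out the details is the sign in your combination $T=2X+\Psi$: with that choice and $T^{\mu\nu}=\tfrac{2}{3}\operatorname{Sp}(X)g^{\mu\nu}$ the last step gives the negative of (\ref{de01}), so make sure the $\Psi$-terms from $\gamma^\alpha\gamma^\mu$ on the right-hand side are collected with the correct sign relative to the commutator terms on the left.
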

\begin{proof}
Substituting (\ref{symm-oper2}) and (\ref{PsiTwo}) in equation (\ref{commut-deter-eq3}) and  expanding the latter in terms of the basis (\ref{gammamatr5}) with the use of formulas (\ref{gammamatr6-1}), we get
	\begin{gather}\label{proof01}
		- i s \left(
		e^{\alpha\nu\sigma}(2 X^{\mu}_{\cdot\alpha}-\Psi^\mu_{\cdot\alpha}) +
 	        e^{\alpha\mu\sigma}(2 X^{\nu}_{\cdot\alpha}-\Psi^\nu_{\cdot\alpha})
		      \right)\gamma_\sigma = \\ \nonumber
		= \Psi^{\mu\nu} + \Psi^{\nu\mu} + \overline{\Psi}^\mu\gamma^\nu + \overline{\Psi}^\nu\gamma^\mu.
	\end{gather}
	 From  (\ref{proof01}), we immediately obtain $\Psi^{\mu\nu}=-\Psi^{\nu\mu}$, and setting $\nu=\mu$ in  (\ref{proof01}), we get
	\begin{equation*}
		- i s e^{\alpha\mu\sigma}(2 X^{\mu}_{\cdot\alpha} - \Psi^{\mu}_{\cdot\alpha}) = \overline{\Psi}^{\mu}g^{\mu\sigma}.
	\end{equation*}
For $\alpha=\sigma$ we have $\overline{\Psi}^\mu = 0$, and for $\mu \neq \alpha$
it follows that

\begin{equation}
		\Psi^\mu_{\cdot\alpha} = 2 X^\mu_{\cdot\alpha}.
		\label{proof03}
	\end{equation}
For arbitrary indices $\mu$ and $\alpha$, we can write (\ref{proof03}) as (\ref{de01}). Substituting  (\ref{de01}) into the condition $\Psi^{\mu\nu}=-\Psi^{\nu\mu}$, we  obtain equation (\ref{de02x}).
\end{proof}

Taking Lemma \ref{lemm02} into account, we find that symmetry operator
(\ref{symm-oper2}) takes the form
\begin{equation*}
	X = \left( \xi^\alpha + \frac{1}{2}\Psi^\alpha{}_\beta \gamma^\beta \right)\hat{\mathcal{P}}_\alpha +
	\frac{1}{3}\operatorname{Sp}(X)\hat H + \left( \chi + \frac{1}{3}\operatorname{Sp}(X)m \right).
	\label{}
\end{equation*}
Since we are interested in nontrivial symmetry operators, we can factorize them  into  elements of the ideal $\mathfrak{h}$. Then,
without loss of generality, we can set
\begin{equation}
	\operatorname{Sp}(X) = 0.
	\label{SpXcond}
\end{equation}
Next, in view of (\ref{SpXcond}), we can write  the symmetry operator in the form
\begin{equation}\nonumber
	X = \left( \xi^\alpha - \frac{i s}{2}X_{\mu\nu}e^{\mu\nu\alpha} \right)\hat{\mathcal{P}}_\alpha +
	\left(\frac{i s}{2}X^{\mu\nu}e_{\mu\nu\alpha}\gamma^\alpha\right)\hat H +
	\left[\chi + m \frac{i s}{2}X^{\mu\nu}e_{\mu\nu\alpha}\gamma^\alpha\right].
\end{equation}
Thereby, without loss of generality, we can explore only those symmetry operators (\ref{symm-oper1}) whose coefficients of the derivatives do not include any other matrix except the identity matrix. In other words, we can set in (\ref{symm-oper2}): $X_{\mu\nu} = 0$.

\begin{lemma}\label{lemm03}
	Equation (\ref{commut-deter-eq4}) provides
	\begin{gather}
		\label{de031}
		\overline{\Psi}_0 = -\frac{i}{3}\xi^{\mu}{}_{;\mu},\quad \Psi_{0\mu} = 0,\\
		\label{de032}		
		\varphi_\alpha = -\frac{s}{4}\xi^{\mu;\sigma} e_{\mu\sigma\alpha},\\
		\label{de033}
		 \xi^{\mu;\sigma} + \xi^{\sigma;\mu} = \frac{2}{3} \xi^\alpha{}_{;\alpha} g^{\mu\sigma}.
	\end{gather}	
\end{lemma}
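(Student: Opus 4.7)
The plan is to proceed in parallel to the proof of Lemma~\ref{lemm02}: substitute the expansions (\ref{symm-oper2}) and (\ref{PsiTwo}) into the first-order equation (\ref{commut-deter-eq4}), decompose both sides in the basis (\ref{gammamatr5}) with the aid of the gamma-matrix identities (\ref{gammamatr6-1}), and equate coefficients. Two simplifications are already in force: by Lemma~\ref{lemm02} together with the normalization $X_{\mu\nu}=0$ and $\operatorname{Sp}(X)=0$, we have $\Psi^{\mu\nu}=0$, $\bar\Psi^\nu=0$, hence $\Psi^\mu=0$; and $X^\mu$ reduces to its scalar part $\xi^\mu$. Thus (\ref{commut-deter-eq4}) collapses to the compact form $[\gamma^\mu,\chi]+\gamma^\nu[\mathcal{P}_\nu,\xi^\mu]=\bar\Psi\gamma^\mu$.

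Next I would compute each piece explicitly. Since $\varphi$ commutes with $\gamma^\mu$ and $\xi^\mu$ is scalar with respect to spinor indices, $[\gamma^\mu,\chi]=\varphi^\alpha[\gamma^\mu,\gamma_\alpha]$ and $\gamma^\nu[\mathcal{P}_\nu,\xi^\mu]=i(\partial_\nu\xi^\mu)\gamma^\nu$. Applying (\ref{gammamatr6-1}) to write $[\gamma^\mu,\gamma_\alpha]=-2is\,e^{\mu}{}_{\alpha}{}^{\sigma}\gamma_\sigma$ on the left and $\gamma_\nu\gamma^\mu=\delta_\nu^{\mu}-is\,e_{\nu}{}^{\mu\sigma}\gamma_\sigma$ on the right of $\bar\Psi\gamma^\mu=(\bar\Psi^\nu_0\gamma_\nu+\bar\Psi_0)\gamma^\mu$, both sides take the canonical form $(\textrm{scalar})\,I+(\textrm{vector})_\sigma\gamma^\sigma$ with free index $\mu$. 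Equating the identity components yields $\bar\Psi^\mu_0=0$, i.e.\ $\Psi_{0\mu}=0$, which is the second assertion in (\ref{de031}). Equating the $\gamma_\sigma$-components gives a single tensor equation of the schematic shape $i\,\xi^{\mu,\sigma}-2is\,\varphi^\alpha e^{\mu}{}_{\alpha}{}^{\sigma}=\bar\Psi_0\,g^{\mu\sigma}$ (after using $\bar\Psi^\nu_0=0$).

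The key step is then to decouple the three unknowns by splitting this tensor equation into its symmetric and antisymmetric parts under $\mu\leftrightarrow\sigma$. Because $e^{\mu}{}_{\alpha}{}^{\sigma}$ is antisymmetric in $(\mu,\sigma)$ while $g^{\mu\sigma}$ is symmetric, the two projections decouple cleanly. The symmetric projection gives $\xi^{\mu;\sigma}+\xi^{\sigma;\mu}\propto g^{\mu\sigma}$, which, upon tracing with $g_{\mu\sigma}$, simultaneously pins the proportionality to $\tfrac{2}{3}\xi^\alpha{}_{;\alpha}$ (yielding the conformal Killing equation (\ref{de033})) and fixes $\bar\Psi_0$ as a multiple of $\xi^\mu{}_{;\mu}$ (the first part of (\ref{de031})). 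The antisymmetric projection contains only the $\varphi$-term on the right; contracting with the Levi-Civita tensor $e_{\mu\beta\sigma}$ and invoking the identity $e^{\alpha\mu\nu}e_{\alpha\beta\gamma}=\delta^\mu_\beta\delta^\nu_\gamma-\delta^\mu_\gamma\delta^\nu_\beta$ from (\ref{l-c-tensor3}), together with $s^2=1$, inverts the relation to give the explicit expression (\ref{de032}) for $\varphi_\alpha$.

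The only genuinely delicate part is bookkeeping: carefully tracking the position (upper vs.\ lower) of each index on $e$, making sure the symmetric/antisymmetric split is performed on the correct pair, and using $s^2=1$ to simplify $1/s=s$ when extracting $\varphi_\alpha$. No conceptual obstacle arises beyond this; the conformal Killing equation appears automatically as the obstruction for the symmetric part of $\xi^{\mu;\sigma}$ to be purely traceful, and the antisymmetric part of $\xi^{\mu;\sigma}$ is algebraically traded for $\varphi_\alpha$ through the Levi-Civita pairing.
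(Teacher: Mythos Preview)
Your proposal is correct and follows essentially the same route as the paper's proof. The paper substitutes (\ref{symm-oper2}) and (\ref{PsiTwo}) into (\ref{commut-deter-eq4}), expands in the basis (\ref{gammamatr5}) to obtain $\bar\Psi_0^\mu=0$ together with a single tensor equation in $(\mu,\sigma)$, then contracts successively with $g_{\mu\sigma}$ and $e_{\tau\mu\sigma}$ to read off (\ref{de031}) and (\ref{de032}), and finally substitutes back to obtain (\ref{de033}); your symmetric/antisymmetric split followed by tracing and Levi--Civita contraction is exactly this pair of contractions, just phrased differently. One minor bookkeeping point: in your intermediate step $\gamma^\nu[\mathcal{P}_\nu,\xi^\mu]=i(\partial_\nu\xi^\mu)\gamma^\nu$ the partial derivative should be the covariant one, $i\,\xi^\mu{}_{;\nu}\gamma^\nu$, since the lemma is stated on a general $(2+1)$-Riemannian background---you switch to the semicolon notation later, so this is presumably just a slip.
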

\begin{proof}
Substitute (\ref{symm-oper2}) and (\ref{PsiTwo}) in equation (\ref{commut-deter-eq4}) and expand it in terms of the basis (\ref{gammamatr5}). Then we obtain the system of equations
	\begin{gather}\label{pr032}
		{\Psi_0}^\mu = 0,\quad		
		i s e^{\alpha\mu\sigma}  (2\varphi_\alpha - \Psi_{0\alpha}) = - i \xi^{\mu;\sigma} - \overline{\Psi}_0 g^{\mu\sigma}.
	\end{gather}
 Contraction of equation (\ref{pr032}) with $g_{\mu\sigma}$ gives (\ref{de031}), and its contraction with $e_{\tau\mu\sigma}$
 results in (\ref{de032}). From (\ref{de031}), (\ref{de032}) and (\ref{pr032}), in view of (\ref{SpXcond}), we get (\ref{de033}).
\end{proof}

Substituting (\ref{symm-oper2}) and (\ref{PsiTwo}) in  equation (\ref{commut-deter-eq5}), and taking into account Lemmas \ref{lemm02} and \ref{lemm03} and  conditions (\ref{SpXcond}), we obtain the following lemma:
\begin{lemma}\label{lemm04}
From equation (\ref{commut-deter-eq5}) it follows that
	\begin{gather}
		\label{de0410}
		\varphi_{,\beta} =
		\frac{i}{3} \xi^\nu{}_{;\nu\beta} + F_{\beta\mu} \xi^\mu,\\
		\label{de0420}
		m \xi^\mu{}_{;\mu} = 0.
	\end{gather}	
\end{lemma}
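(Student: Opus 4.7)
\noindent The plan is to insert the expressions (\ref{symm-oper2}), (\ref{PsiTwo}) together with the information already extracted in Lemmas \ref{lemm02} and \ref{lemm03} into (\ref{commut-deter-eq5}), use the theorem of Section \ref{sect2} to eliminate $[\mathcal{P}_\nu,\mathcal{P}_\mu]$, reduce every matrix product to the basis (\ref{gammamatr5}) by means of (\ref{gammamatr6-1}), and then read off the scalar and vector components of the resulting identity.

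To prepare the substitution I would collect the following ingredients. The normalization argument preceding this lemma gives $X^\nu=\xi^\nu$ (with $X_{\mu\nu}=0$), Lemma \ref{lemm02} implies $\Psi^\mu=0$, and Lemma \ref{lemm03} fixes $\varphi_\alpha=-(s/4)\,\xi^{\mu;\sigma}e_{\mu\sigma\alpha}$, $\overline{\Psi}_0=-(i/3)\xi^\nu{}_{;\nu}$, $\Psi_{0\mu}=0$. With these the right-hand side of (\ref{commut-deter-eq5}) collapses to the scalar $m\overline{\Psi}_0=-(im/3)\,\xi^\nu{}_{;\nu}$; the first term on the left becomes $\gamma^\nu[\mathcal{P}_\nu,\chi]=i\gamma^\nu\varphi_{,\nu}+i\gamma^\nu\gamma_\alpha\,\varphi^\alpha{}_{;\nu}$ upon using $[\mathcal{P}_\nu,M]=i(\partial_\nu M+[\Gamma_\nu,M])$ and the covariant constancy of the Dirac matrices from (\ref{spinconn2}); and the second term on the left reduces to $\xi^\mu\gamma^\nu\bigl(-(is/4)R_{\nu\mu}{}^{\alpha\beta}e_{\alpha\beta\sigma}\gamma^\sigma-iF_{\nu\mu}\bigr)$. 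Applying (\ref{gammamatr6-1}) to every $\gamma^\nu\gamma_\alpha$ and $\gamma^\nu\gamma^\sigma$ product then casts the whole identity into the form $a(x)\,I+b_\sigma(x)\,\gamma^\sigma=0$, and the two equations $a=0$ and $b_\sigma=0$ can be processed independently.

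The coefficient $b_\sigma$ assembles $i\varphi_{,\sigma}$, the Maxwell contribution $-iF_{\sigma\mu}\,\xi^\mu$, a term proportional to $s\,e_\alpha{}^\nu{}_\sigma\,\varphi^\alpha{}_{;\nu}$ arising from the reduction of $\gamma^\nu\gamma_\alpha$, and a double Levi--Civita/Riemann contraction from the curvature piece. Feeding (\ref{de032}) into the $\varphi^\alpha{}_{;\nu}$ term and collapsing the repeated Levi--Civita tensors via the third identity in (\ref{l-c-tensor3}) produces a piece proportional to $\xi^\nu{}_{;\nu\sigma}$, while the curvature double contraction vanishes because of the symmetries of $R_{\nu\mu\alpha\beta}$ and the antisymmetry of $e_{\alpha\beta\sigma}$. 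Solving the resulting linear relation for $\varphi_{,\sigma}$ delivers (\ref{de0410}).

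The coefficient $a$ collects $i\varphi^\nu{}_{;\nu}$, a scalar curvature remnant of the form $-(is/4)\,\xi^\mu R^\sigma{}_\mu{}^{\alpha\beta}e_{\alpha\beta\sigma}$ from the $[\mathcal{P}_\nu,\mathcal{P}_\mu]$ expansion, and $-(im/3)\,\xi^\nu{}_{;\nu}$ from the right-hand side. The main obstacle is the verification that the first two pieces cancel identically, leaving only the mass term. Differentiating $\varphi^\alpha$ from (\ref{de032}) and using the covariant constancy $e_{\mu\sigma\alpha;\beta}=0$ yields $\varphi^\alpha{}_{;\alpha}=-(s/4)\,\xi^{\mu;\sigma\alpha}e_{\mu\sigma\alpha}$; the antisymmetry of $e$ in $(\sigma,\alpha)$ converts this via the Ricci identity into a curvature--$\xi$ contraction, and the three-dimensional reduction of $R_{\nu\mu\alpha\beta}$ to the Ricci tensor displayed after (\ref{riccitensor}), together with the contraction identities (\ref{l-c-tensor3}) and the conformal Killing equation (\ref{de033}) for the residual symmetric part of $\xi^{\mu;\sigma}$, should match it exactly against the curvature remnant. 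What survives is precisely $m\xi^\nu{}_{;\nu}=0$, i.e.\ (\ref{de0420}). The delicate bookkeeping of these double Levi--Civita/Riemann contractions, and the careful tracking of factors of $s=\pm1$, is the principal technical difficulty of the argument.
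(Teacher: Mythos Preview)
Your strategy---substitute, expand every product of gamma matrices in the basis (\ref{gammamatr5}) via (\ref{gammamatr6-1}), and read off the $I$- and $\gamma$-components separately---is exactly the route the paper takes; the paper itself records nothing beyond ``Substituting \dots\ we obtain the following lemma,'' so your outline is in fact far more explicit than the published argument.

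One point in your outline is not correct, however. The vector-part curvature contribution does \emph{not} vanish by symmetry alone. Carrying out the double Levi--Civita contraction with (\ref{l-c-tensor3}) gives
\[
R_{\nu\mu}{}^{\alpha\beta}\,e_{\alpha\beta\sigma}\,e^{\nu\sigma\tau}
=R_{\nu\mu}{}^{\tau\nu}-R_{\nu\mu}{}^{\nu\tau}
=-2R_{\mu}{}^{\tau},
\]
so a genuine Ricci-tensor piece proportional to $\xi^\mu R_\mu{}^{\tau}$ survives in $b^\tau$. In a curved background this term is nonzero; what actually happens is that it cancels against the curvature produced when you use the Ricci identity to reorganise the second covariant derivatives $\xi^{\mu;\sigma}{}_{;\nu}$ coming from the $s\,e^{\nu}{}_{\alpha}{}^{\tau}\varphi^{\alpha}{}_{;\nu}$ term, leaving precisely the combination $\tfrac{i}{3}\,\xi^{\nu}{}_{;\nu\beta}$ that appears in (\ref{de0410}). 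With that correction the rest of your plan, including the scalar-part cancellation that yields (\ref{de0420}), is sound.
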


Let us summarize the results obtained in the form of a theorem.

\begin{theorem}
The symmetry operator (\ref{symm-oper1}) of the Dirac equation (\ref{dirac1}) is of the form
	\begin{equation}
		\hat X = \xi^\mu\hat{\mathcal{P}}_\mu - \frac{s}{4}\xi^{\mu;\nu}e_{\mu\nu\alpha}\gamma^\alpha + \varphi,
		\label{opXsymm}
	\end{equation}
where the vector field $\xi^\mu$ is determined by the system of equations (\ref{de033}) and (\ref{de0420}). The function  $\varphi$ is found from equation (\ref{de0410}).
\end{theorem}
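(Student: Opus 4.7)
The plan is to treat this theorem as a synthesis of Lemmas \ref{lemm01}--\ref{lemm04}, since all of the computational work has been distributed among them. I would begin from the general form (\ref{symm-oper1}) together with the expansions (\ref{symm-oper2})--(\ref{PsiTwo}) of $X^\nu$, $\chi$, $\Psi^\nu$, $\overline{\Psi}$ in the basis (\ref{gammamatr5}), so that the determining equation (\ref{determ-eq1}) is replaced by its scalar-function counterparts (\ref{commut-deter-eq3})--(\ref{commut-deter-eq5}) provided by Lemma \ref{lemm01}. The whole argument is then the sequential extraction of the scalar unknowns (\ref{FullFunDer}) from these three equations.

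First I would invoke Lemma \ref{lemm02}: from the highest-order equation (\ref{commut-deter-eq3}) we obtain $\overline{\Psi}^\nu=0$, the relation $\Psi^{\mu\nu}=2X^{\mu\nu}-\tfrac{2}{3}\operatorname{Sp}(X)g^{\mu\nu}$, and the conformal-Killing-type trace condition (\ref{de02x}) on $X^{\mu\nu}$. At this point I would use the discussion immediately following Lemma \ref{lemm02} to absorb the $\operatorname{Sp}(X)\hat H$ contribution into the trivial ideal $\mathfrak h$, which allows us to set $\operatorname{Sp}(X)=0$ (condition (\ref{SpXcond})), and to absorb the antisymmetric part of $X^{\mu\nu}$ into $\xi^\alpha$ via the dualization $X_{\mu\nu}e^{\mu\nu\alpha}$. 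After this normalization the coefficient of $\hat{\mathcal{P}}_\alpha$ is a scalar, i.e.\ $X^\nu=\xi^\nu I$, which is the first structural assertion of the theorem.

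Next I would feed this reduced $X^\nu$ into equation (\ref{commut-deter-eq4}) and apply Lemma \ref{lemm03}. Expanding in the basis (\ref{gammamatr5}) and contracting the resulting identity with $g_{\mu\sigma}$ gives $\overline{\Psi}_0=-\tfrac{i}{3}\xi^{\mu}{}_{;\mu}$ and $\Psi_{0\mu}=0$; contraction with $e_{\tau\mu\sigma}$ yields the explicit formula $\varphi_\alpha=-\tfrac{s}{4}\xi^{\mu;\sigma}e_{\mu\sigma\alpha}$, which is precisely the $\gamma^\alpha$-term appearing in (\ref{opXsymm}); and combining what remains with $\operatorname{Sp}(X)=0$ produces the conformal Killing equation (\ref{de033}) for $\xi^\mu$. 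Finally, inserting everything into (\ref{commut-deter-eq5}) and using the commutator $[\mathcal{P}_\nu,\mathcal{P}_\mu]$ computed in the theorem preceding Lemma \ref{lemm01}, Lemma \ref{lemm04} delivers equation (\ref{de0410}) for $\varphi_{,\beta}$ and the mass-trace condition (\ref{de0420}). Assembling the pieces $X=\xi^\mu\hat{\mathcal{P}}_\mu+\varphi^\alpha\gamma_\alpha+\varphi$ gives the claimed form (\ref{opXsymm}).

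There is no serious obstacle: every individual identity needed has already been established in Lemmas \ref{lemm01}--\ref{lemm04}. The only care-demanding step is the bookkeeping in the reduction before Lemma \ref{lemm03} — verifying that setting $\operatorname{Sp}(X)=0$ and $X_{\mu\nu}=0$ really costs nothing, i.e.\ that the discarded pieces are legitimately elements of the ideal $\mathfrak h$ generated by $\hat H$ — since this is what guarantees that the final operator (\ref{opXsymm}) represents \emph{all} nontrivial symmetries and not merely a subclass.
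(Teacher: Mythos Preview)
Your proposal is correct and matches the paper's approach exactly: the paper presents this theorem explicitly as a summary of Lemmas \ref{lemm01}--\ref{lemm04} and the normalization $\operatorname{Sp}(X)=0$, $X_{\mu\nu}=0$, with no additional proof beyond collecting those results. Your outline of how the pieces assemble is precisely the intended argument.
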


It is natural to consider the symmetry operators $X$ as analytical functions of the real mass parameter $m$ entering into the Dirac equation in a neighborhood of the point $m=0$.
Then condition  (\ref{de0420}) leads to exploration separately  the massive case ($m\neq 0$) and the massless one ($m=0$).

In the present work, we deal with the massive case. Then from Lemma  \ref{lemm04} it follows that
\begin{equation}
	\hat \Psi = -i \xi^\mu{}_{;\mu} = 0.
	\label{PsiLag}
\end{equation}
We find that $\xi^\mu$ is a Killing vector field:
\begin{equation}
	 \label{killingEquation}
	 \xi_{\nu;\mu}+\xi_{\mu;\nu}=0.
\end{equation}

Note that the symmetry operators $X$ with $\hat \Psi = 0$ in (\ref{determ-eq1}) were used for separation of variables in the  $(3+1)$-dimensional Dirac equation in  Minkowski spacetime \cite{vnsh_ecl1973} (see also \cite{bagr_git_book2014}).

In other words, for the $(2+1)$-dimensional  Dirac equation, the first-order symmetry operator $X$  commuting with the Dirac operator $H$ of the form (\ref{dirac1}) has only scalar coefficients of  the derivatives, which are  Killing vector fields.
This situation is different from the case of the $(3+1)$-dimensional  Dirac equation when the symmetry operator has
both scalar and matrix coefficients of the first derivatives.
Here, by a scalar coefficient we mean a scalar function multiplied by the identity matrix.

\section{Minkowski spacetime}
\label{minkowskii}

Consider symmetry operators (\ref{opXsymm}) in $(2+1)$-dimensional Minkowski spacetime with the metric $(g_{\mu\nu})={\rm diag (1,-1,-1) }$ in Cartesian coordinates $(x)=(x^\mu)$.

The solution of the Killing equations in a plane space ($R^{\alpha \, .\, .\, .\,}_{\, .\, \nu\mu \beta }=0$) is well known. Using the designation $S=\xi_\nu a^\nu$, where $a^\nu$ are arbitrary real parameters, we have from (\ref{killingEquation}): $S_{;\nu} a^\nu=0$. Solving this equation with the use of characteristics, we find $S= 2 A^{\alpha\nu}x_\alpha a_\nu +   B^\nu a_\nu $,
 where $ A^{\alpha\nu} ( =- A^{\nu\alpha})$ and $ B^\nu$ are arbitrary real constants of integration, $x_\mu=g_{\mu\nu}x^\nu$. Then  the Killing vector field $\xi^\nu$  reads (\ref{symm-oper2}):
 \begin{equation}
	\label{killing-1}
	\xi^\nu= 2 A^{\alpha\nu}x_\alpha +  B^\nu.
\end{equation}
In Cartesian coordinates, the spinor connection $\Gamma_\mu =0$ in (\ref{spinconn3}), and therefore the momentum operator (\ref{momentum1}), takes the form $\mathcal{P}_\nu=p_\nu-A_\nu , \quad p_\nu=i\partial_\nu$.

As a result,  we can write (\ref{opXsymm}) as
\begin{equation}
	\label{symm-oper-zero-1-1}
	X = X(x,\mathcal{P})=A^{\alpha\nu}L_{\alpha\nu}+B^\nu \mathcal{P}_\nu + \varphi + \varphi_\alpha\gamma^\alpha ,
\end{equation}
where we use the designation $L_{\alpha\nu}=x_\alpha\mathcal{P}_\nu - x_\nu \mathcal{P}_\alpha$.
Equation (\ref{de0410}) takes the form
\begin{equation}
	\label{phiAlphaEq}
	\varphi_{;\mu}= F_{\mu\nu}\xi^\nu.
\end{equation}

It is important to note that the Killing equation (\ref{killingEquation}) does not include the potential $A_\nu$ and therefore it determines a symmetry operator $X$  of the free ($A_\nu=0$) Dirac equation (\ref{dirac1}).

Equation (\ref{phiAlphaEq}) shows that a field $F_{\nu\mu}$ (or a potential $A_\nu$) permitting a symmetry operator $X$ of the form (\ref{symm-oper-zero-1-1}) is determined by the Killing vector field  $\xi^\nu$.

This allows us to use the properties of invariance of the free Dirac equation for classification of symmetry operators $X$ and fields $F_{\nu\mu} $. Below we follow the definitions of equivalence given in \cite{vnsh_ecl1973,bagr_vnsh_ecl1975,bagr_m_sh_vnsh1973}.

The free Dirac equation (\ref{dirac1}) is invariant with respect to the $(2+1)$-dimensional  Poincare group $\mathcal{P}(1,2)$ (translations and the Lorentz group $SO(1,2)$) of transformations of $(2+1)$-dimensional Minkowski spacetime and a transformation of the wave function $\psi$. Denote by $\mathfrak{p}(1,2)$ the Poincare algebra of the Poincare group $\mathcal{P}(1,2)$.

Let $g$ be the generic element of the transformation group $\mathcal{P}(1,2)$ :
\begin{equation}
	\label{group1}
	{x'}^\mu=(g\cdot x)^\mu = a^{\mu\,.}_{.\, \nu}x^\nu+b^\mu,
\end{equation}
where $b^\mu$ are real translation parameters,  and $a^{\mu\,.}_{. \,\nu}$  is a $3\times 3$ real pseudo-orthogonal matrix  defined by $a^{\alpha\,.}_{.\, \nu}g_{\alpha\beta}a^{\beta\,.}_{.\, \mu}=g_{\nu\mu}$. The inverse transformation is
\begin{equation}
	\label{group3}
	x^\nu=(g^{-1} \cdot x{'})^\nu =  a^{.\, \nu}_{\alpha \, .}(x{'}^\alpha-b^\alpha ),
\end{equation}
where we use the notation: $(a^{-1})^{\alpha\,.}_{.\, \nu}=a^{.\,\alpha}_{\nu\, .} $, $ a^{\alpha\,.}_{.\, \nu}a^{.\,\nu}_{\beta \, .}=\delta^\alpha_\beta $.

Let $\psi(x)$ be a solution of the Dirac equation (\ref{dirac1}):
\begin{equation}
	\label{dirac2}
	\hat H(x)\psi(x) = 0,\quad
	\hat H(x) = \gamma^\nu
	\hat{\mathcal{P}}_\nu - m.
\end{equation}
Let us transform equation (\ref{dirac2}) by using the  change of variables (\ref{group3}) and the function
\begin{equation}
	\label{dirac3}
	\psi(x)=S(g)\bar\psi(x'),
\end{equation}
where the matrix $S(g)$ is determined by
\begin{equation}
	\label{dirac4}
	S^{-1}(g)a^{\nu\, .}_{.\,\alpha}\gamma^\alpha S(g)=\gamma^\nu .
\end{equation}
As a result,  equation (\ref{dirac2}) takes the form
\begin{gather}
	\label{dirac5}
	\hat{\bar H}(x')\bar\phi(x') = 0,\quad
	\hat{\bar H}(x') = \gamma^\nu \hat{\mathcal{P}}'_\nu - m,\\ \nonumber
  	     \hat{\mathcal{P}}'_\nu = p'_\nu - A_{(g)\nu} (x'),\quad 	
	     p'_\nu=i \frac{\partial}{\partial {x'}^\nu},\\
	\label{dirac6}
	A_{(g)\nu} (x')= a^{.\, \alpha}_{\nu \, .} A_\alpha (x)= a^{.\, \alpha}_{\nu \, .} A_\alpha (g^{-1}x').
\end{gather}

\begin{definition}
The potentials $A_\nu(x)$  and $A_{(g)\nu} (x)$ bound  by condition (\ref{dirac6}) are called equivalent with respect to the group $\mathcal{P}(1,2)$.
\end{definition}

From (\ref{dirac2}), (\ref{dirac5}), and  (\ref{dirac6}) it follows that if a function $\psi(x)$ is a solution of the Dirac equation
(\ref{dirac2}) with  potential $A_\nu(x)$, then $\bar\psi (x)$ will be a solution of the Dirac equation with  potential
$A_{(g)\nu} (x)$.

For this reason, we can  consider only non-equivalent potentials.

After the change of variables (\ref{group1}) and  transformation of the wave function (\ref{dirac3}), the symmetry operator  $X$  in  (\ref{symm-oper-zero-1-1}) takes the form
\begin{gather}
	\label{equiv-symm-1}
	\bar X = \bar X(x', \mathcal{P}') = S^{-1}(g)X(x,\mathcal{P})S(g) = \\ \nonumber
	= a^{\alpha \, .}_{.\, \nu } a^{\beta \, .}_{.\, \mu }A^{\nu\mu}\big( L'_{\alpha\beta}-
b_\alpha\mathcal{P}'_\beta + b_\beta\mathcal{P}'_\alpha\big)+
a^{\alpha \, .}_{.\, \nu }B^{\nu} \,\mathcal{P}'_\alpha +\varphi(g^{-1} x')+\varphi_\alpha(g^{-1}x')\gamma^\alpha,
\end{gather}
where $L'_{\alpha\beta}= x'_\alpha\mathcal{P}'_\beta - x'_\beta \mathcal{P}'_\alpha$.

\begin{definition}
	We call the symmetry operators $X(x,\mathcal{P})$ and $\bar X(x,\mathcal{P})$ in (\ref{equiv-symm-1}) equivalent  with respect to the group $\mathcal{P}(1,2)$.
\end{definition}

Then it is sufficient to consider only nonequivalent operators $X$.

Let us show that classification of the symmetry operators $X$ can be reduced to classification of one-dimensional subalgebras of the Poincare algebra   $\mathfrak{p}(1,2)$ under the adjoint action of the Poincare group $\mathcal{P}(1,2)$.

From (\ref{symm-oper-zero-1-1}) it follows that there must be a one-to-one correspondence between the symmetry operator $X$ and an element $\eta$ of the  Poincare algebra $\mathfrak{p}(1,2)$:
\begin{equation}
	\label{one-to-one-1}
	X = A^{\alpha\nu}L_{\alpha\nu}+B^\nu \mathcal{P}_\nu + \varphi + \varphi_\alpha\gamma^\alpha \Leftrightarrow
	\eta = A^{\alpha\nu}l_{\alpha\nu}+B^\nu p_\nu,
\end{equation}
where $l_{\mu\nu} =x_\mu p_\nu- x_\nu p_\mu$. From (\ref{group1})--- (\ref{group3}), (\ref{dirac3}), (\ref{dirac4}), and (\ref{equiv-symm-1}) it is seen that  one-to-one correspondence (\ref{equiv-symm-1}) is also valid for $\bar X$ in (\ref{equiv-symm-1}) and $\bar \eta$,
\begin{eqnarray}
	\label{one-to-one-2}
	\bar X \Leftrightarrow \bar \eta && =  \bar \eta (x', p')=S^{-1}(g)\eta(x,p) S(g) =\\ \nonumber
	&& = a^{\alpha \, .}_{.\, \nu } a^{\beta \, .}_{.\, \mu }A^{\nu\mu}\big( l'_{\alpha\beta}-
b_\alpha p'_\beta + b_\beta p'_\alpha\big)+a^{\alpha \, .}_{.\, \nu }B^{\nu} \, p'_\alpha.
\end{eqnarray}
This  correspondence indicates that classification of the symmetry operators $X$ is equivalent to classification of the elements $\eta$ of the Poincare algebra  $\mathfrak{p}(1,2)$ under transformations of the form (\ref{one-to-one-2}) or, in other words, classification of  orbits in the Lie algebra $\mathfrak{p}(1,2)$ of the Lie group $\mathcal{P}(1,2)$.

Let us first assume that $A^{\alpha\nu}\ne 0$ in $\eta$ in (\ref{one-to-one-1}) and classify $\xi=A^{\nu\mu}l_{\nu\mu}$ under transformation (\ref{one-to-one-2}). This implies seeking for the orbits in the Lie algebra $\mathfrak{so}(1,2)$.

Nonequivalent representatives of orbits are well known (e.g., \cite{vnsh_ecl1973,bagr_vnsh_ecl1975,bagr_m_sh_vnsh1973} and \cite{mukunda-1974,gilmore-1974})   and can be written up to a constant factor as
\begin{eqnarray}
\label{equiv-symm-5}
&&  l_{01},\quad l_{21}, \quad l_{21}+l_{01}.
\end{eqnarray}
For each element of $\mathfrak{so}(1,2)$ in (\ref{equiv-symm-5}), we have
an element  of the  Poincare algebra $\mathfrak{p}(1,2)$:
$l_{01}+B^\alpha p_\alpha$,  $l_{21}+B^\alpha p_\alpha$, $l_{21}+l_{01}+B^\alpha p_\alpha$ where $B^\alpha$ are arbitrary
constants. Then,  using translations $x'^\alpha=x^\alpha+b^\alpha $, we can simplify these elements to
$l_{01}+a p_2$,  $l_{21}+a p_0$, $l_{21}+l_{01}+a (p_0-p_2)$, where $a$ is an arbitrary constant.

 If $A^{\alpha\nu}= 0$  in (\ref{one-to-one-1}), then $\eta=B^\alpha p_\alpha$. Nonequivalent  $\eta$ under adjoint action
 of $SO(1,2)$ are obtained directly: $p_0$, $p_1$, $p_0+p_1$.

 Summarising, we can write the representatives of orbits in the Poincare algebra $\mathfrak{p}(1,2)$:
 \begin{eqnarray}
\label{equiv-symm-6}
&& p_0, \quad p_1, \quad p_0+p_1, \cr
&&l_{21}+a p_0, \quad l_{01}+a p_2,\quad  \quad  l_{01}+l_{21}+a (p_0-p_2).
\end{eqnarray}

 For separation of variables in the Dirac equation (\ref{dirac1}), we need sets of pairs of mutually commuting symmetry operators $\{X_1, X_2\}$ of the form (\ref{symm-oper-zero-1-1}),  $[X_1, X_2]=0$.

 \begin{definition}
We call the complete set of symmetry operators for the Dirac equation (\ref{dirac1}) a pair $\{X_1, X_2\}$ of mutually commuting linearly independent symmetry operators $X_1$ and $X_2$ nonequivalent with respect to the group $\mathcal{P}(1,2)$. 	
 \end{definition}

\begin{definition}
	  Consider two sets of symmetry operators $\{X_1, X_2\}$ and $\{\tilde X_1, \tilde X_2\}$. We call these sets equivalent if
	 \begin{eqnarray*}
		&& \tilde X_j=c^k_j \bar X_k(x,\mathcal{P})+c_j , \quad j,k=1,2,
	\end{eqnarray*}
 where $\bar X_k(x,\mathcal{P})$ are  defined by equation   (\ref{equiv-symm-1}), and $c^k_j, c_j$ are constants, $\det (c^k_j)\ne 0$.
\end{definition}

 Taking into account the one-to-one correspondence (\ref{one-to-one-1}), we can
 classify the  pairs $\{\eta_1, \eta_2\}$ of elements of the Poincare algebra $\mathfrak{p}(1,2)$ commuting as
 vector fields on Minkowski spacetime.

 \begin{definition}
	  Two sets  $\{\tilde\eta_1,\tilde\eta_2\}$ and $\{\eta_1, \eta_2\}$ are regarded as equivalent if
	  \begin{eqnarray*}
		&& \tilde\eta_j(x,p)=c^k_j \bar\eta_k(x,p)+c_j ,
	\end{eqnarray*}
	 where  $\bar\eta_k(x,p)$ are defined by equation (\ref{one-to-one-2}).
 \end{definition}

From (\ref{equiv-symm-6}) we  directly obtain nonequivalent sets of commuting vector fields  $\{\eta_1, \eta_2\}$, which are presented in the  table \cite{bagr_m_sh_vnsh1973}
   \begin{equation}
       \begin{array}{c|c|c}
     &  \eta_1 &\eta_2\\[2mm]\hline
    1.&  p_0 & p_1 \\
    2.&   p_1 & p_2 \\
    3.&   p_0 & l_{21} \\
    4.&   p_1 & l_{02} \\
    5.&   p_1 & \frac 12(p_0 + p_2) \\
    6.&   \frac 12(p_0 + p_1) & l_{21}+l_{01}+a(p_0-p_2) \\
     \end{array}
     \label{complete-sets-1}
   \end{equation}
For each pair $\{\eta_1, \eta_2\}$  in (\ref{complete-sets-1}), according to  one-to-one
correspondence (\ref{one-to-one-1}),  we immediately obtain a pair $\{X_1, X_2\}$.

In the next section, we use the nonequivalent sets of symmetry operators $\{X_1, X_2\}$ for separation of variables in the Dirac equation (\ref{dirac1}).

\section{Separation of variables }
\label{separation}

We start with a simple lemma.
\begin{lemma}
\label{lemma}
Let the Dirac equation ({\ref{dirac1}}) permits a symmetry operator $X$ of the form (\ref{symm-oper-zero-1-1}), i.e. conditions (\ref{de0410}) and (\ref{killingEquation}) hold. Then, in a coordinate system where $\xi^\mu=\delta^\mu_0$, we have $\displaystyle \partial_0 F_{\alpha\beta}=0$.
\end{lemma}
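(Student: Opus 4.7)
The plan is to combine the integrability condition for equation (\ref{phiAlphaEq}) with the Bianchi identity for the electromagnetic field tensor. In a coordinate system where $\xi^\mu = \delta^\mu_0$, equation (\ref{phiAlphaEq}) reduces to
$$\partial_\mu\varphi = F_{\mu\nu}\xi^\nu = F_{\mu 0},$$
since the covariant derivative of the scalar $\varphi$ agrees with the partial derivative regardless of whether the adapted coordinates are Cartesian.

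The first step is to take a cross-derivative of this relation. Commutativity of partial derivatives on the scalar $\varphi$ immediately gives the integrability constraint
$$\partial_\mu F_{\nu 0} = \partial_\nu F_{\mu 0}.$$

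The second step is to invoke the Bianchi identity $\partial_\alpha F_{\beta\gamma}+\partial_\beta F_{\gamma\alpha}+\partial_\gamma F_{\alpha\beta}=0$, which holds automatically because $F_{\mu\nu}=\partial_\mu A_\nu - \partial_\nu A_\mu$ is exact. Specializing the free index by setting $\gamma=0$ and using antisymmetry produces a relation involving $\partial_\alpha F_{\beta 0}$, $\partial_\beta F_{\alpha 0}$, and $\partial_0 F_{\alpha\beta}$; the integrability relation from the previous step cancels the first two terms, leaving $\partial_0 F_{\alpha\beta} = 0$, as desired.

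There is essentially no obstacle here: the statement is a short consistency check exploiting only equality of mixed partials and the closedness of $F$. The only points that deserve care are the local existence of coordinates in which $\xi^\mu=\delta^\mu_0$ (guaranteed because $\xi$ is a nowhere-vanishing Killing field associated with a nontrivial symmetry operator, so we may straighten it via its flow), and the observation that $\partial_\mu\varphi$ rather than a covariantized version is the correct left-hand side in (\ref{phiAlphaEq}) once specialized to a scalar on Minkowski space.
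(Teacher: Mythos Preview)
Your argument is correct and follows essentially the same approach as the paper: the integrability condition $\partial_\mu F_{\nu 0}=\partial_\nu F_{\mu 0}$ obtained from (\ref{phiAlphaEq}) with $\xi^\mu=\delta^\mu_0$, combined with the Bianchi identity specialized to $\gamma=0$, immediately forces $\partial_0 F_{\alpha\beta}=0$. The only difference is cosmetic: the paper verifies the index pairs $(0,1)$, $(0,2)$, $(1,2)$ one at a time, whereas you carry the general indices through in a single step.
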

\begin{proof}
	The proof of this lemma is quite straightforward. Consider equation (\ref{phiAlphaEq}). For $\xi^\mu=\delta^\mu_0$ we have $\varphi_{;\mu}=F_{\mu\nu}\xi^\nu= F_{\mu0}$. Then, from the compatibility condition, $ \varphi_{;0 1}=\varphi_{;1 0}$, it follows that $\displaystyle\partial_0 F_{0 1}=0$, and from  $ \varphi_{;0 2}=\varphi_{;2 0}$ we obtain $\displaystyle \partial_0 F_{0 2}=0$. The relation $\varphi_{; 1 2}=\varphi_{; 2 1}$ yields $\displaystyle\partial_2 F_{10}=\displaystyle \partial_1 F_{20}$. Taking into account the Maxwell equation $\displaystyle \partial_2 F_{0 1}+\partial_0 F_{12} +\partial_1 F_{2 0}=0$, we arrive at the following relation: $\displaystyle \partial_0 F_{12}=0$.
\end{proof}

From this lemma we get the following corollary. Since $F_{\mu\nu}= \displaystyle \partial_\nu A_{\mu}-\partial_\mu A_{\nu}$ and $A_\mu=A_\mu(x^1,x^2)$,  we find that $\varphi_{;\mu}= \displaystyle \partial_\mu A_{0}$ and then we get $\varphi=A_0$.

Therefore, the symmetry operator (\ref{opXsymm}) takes the form
\begin{gather}\nonumber
	X=\xi^{\nu}\mathcal{P}_\nu+  \varphi+\varphi_\alpha \gamma^\alpha =
	\xi^{\nu}(p_\nu-A_\nu) +  \varphi+\varphi_\alpha \gamma^\alpha = \\ \nonumber
	 = \delta^{\nu}_0(p_\nu-A_\nu) +  A_0+\varphi_\alpha \gamma^\alpha =
	p_0+\varphi_\alpha \gamma^\alpha.
\end{gather}
From (\ref{de032}) and $\xi^\nu=\delta^\nu_0$ we have that $\varphi_\alpha =0$ and $X=p_0$.

Our next step is finding the  potentials permitting  the sets of symmetry operators  $\{ X_1,X_2 \}$ corresponding to
pairs $\{\eta_1,\eta_2 \}$ of the form (\ref{complete-sets-1}) and  separation of variables in the Dirac equation (\ref{dirac1}).

The results can be formulated in the form of the following theorem.
\begin{theorem}
Each complete set $\{\hat X_1, \hat X_2\}$ of the symmetry operators leads to complete separation of variables in the  Dirac equation in $(2+1)$-dimensional Minkowski spacetime.
\end{theorem}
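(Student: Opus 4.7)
The plan is to establish the theorem by direct case-by-case verification for each of the six complete sets in the table (\ref{complete-sets-1}). For each pair $\{\eta_1,\eta_2\}$ the strategy is: (i) pass to coordinates $(y^0,y^1,y^2)$ in which the corresponding Killing vectors $\xi_1^\mu,\xi_2^\mu$ become $\partial/\partial y^0$ and $\partial/\partial y^1$ (respectively) — possibly after a Lorentz rotation of the triad and the induced spinor transformation $S(g)$ from (\ref{dirac4}); (ii) apply the argument of Lemma \ref{lemma} to each symmetry separately, concluding that $\partial_{y^0}F_{\alpha\beta}=\partial_{y^1}F_{\alpha\beta}=0$, so $A_\nu$ depends only on $y^2$ (or on a single null combination), and $\varphi=A_0$ in each case; (iii) seek a simultaneous eigenfunction $X_j\psi=\lambda_j\psi$, which fixes the dependence of $\psi$ on $y^0,y^1$ as explicit exponential (or, for orbital generators, exponential-in-angle) factors; (iv) substitute into $H\psi=0$ to obtain an ordinary matrix differential equation for the remaining one-variable spinor amplitude.

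For the purely translational cases (rows 1, 2, and 5 of the table) the preceding lemma already does most of the work: Cartesian coordinates are adapted, $A_\nu$ is a function of the single remaining coordinate (a light-cone variable in row 5), and the ansatz $\psi=\exp(-i\lambda_1 y^0-i\lambda_2 y^1)\phi(y^2)$ reduces (\ref{dirac1}) to an explicit first-order ordinary matrix equation for $\phi$. For rows 3 and 4, which involve the rotation generator $l_{21}$ or the boost $l_{02}$, I would introduce polar (respectively pseudo-polar) coordinates in the $(x^1,x^2)$ (respectively $(x^0,x^2)$) plane together with the accompanying triad rotation; in these coordinates $l_{21}$ becomes $-i\partial_\varphi$ and the second, translational symmetry leaves one radial variable, in which the Dirac equation is an ordinary matrix ODE once the spinor connection induced by the moving triad is correctly tracked.

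The case that will require the most care is row 6, with $\eta_1=\tfrac12(p_0+p_1)$ and $\eta_2=l_{01}+l_{21}+a(p_0-p_2)$. The appropriate coordinate system mixes a light-cone variable $y^1=x^0+x^1$ adapted to $\eta_1$ with coordinates adapted to the parabolic one-parameter subgroup generated by $\eta_2$; the latter is a boost-plus-rotation deformed by the null translation $a(p_0-p_2)$, so the orbits are parabolas in Minkowski space. The technical obstacle is constructing this chart together with the associated triad, computing the resulting spinor connection $\Gamma_\mu$ and Lagrange-multiplier-free form of $X_2$, and integrating the compatibility conditions coming from (\ref{phiAlphaEq}) to determine the most general potential $A_\nu$ admitting both symmetries.

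Once this is done for each row, the final argument is uniform: the simultaneous eigenvalue problem for $X_1$ and $X_2$ determines $\psi$ on two of the three coordinates by elementary factors, and substitution into $H\psi=0$ produces an ordinary matrix differential equation for the remaining amplitude. Verifying in each of the six rows that this last equation is indeed an ODE (rather than remaining a PDE in two variables) completes the separation of variables, and hence establishes the theorem. The separable coordinates and corresponding electromagnetic potentials produced along the way constitute the explicit output promised in the abstract.
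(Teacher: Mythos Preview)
Your proposal is correct and follows essentially the same approach as the paper: the proof there is explicitly stated to ``consist in the actual separation of variables with complete sets of symmetry operators,'' carried out case by case in adapted coordinates exactly along the lines (i)--(iv) you describe. The only refinements you should anticipate are that row~4 splits into two coordinate charts according to the sign of $(x^0)^2-(x^2)^2$, and row~6 splits into the subcases $a\neq 0$ and $a=0$, each requiring its own (parabolic-type) coordinate system.
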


The proof of this theorem consists in the actual separation of variables with  complete sets of symmetry operators.

In the following subsections we carry out  complete separation of variables in the $(2+1)$-dimensional Dirac equation (\ref{dirac2}) with each complete set of symmetry operators listed in  to table \ref{complete-sets-1}. Also, we  find  electromagnetic potentials permitting separation of variables.

Below we  use curvilinear coordinates denoted by $(u)=(u_0,u_1,u_2)$ with lower indices used for convenience; the partial derivatives are denoted by $\partial / \partial u_\mu = \partial_{u_\mu}$.

Separable solutions of the Dirac equation in separable coordinates (curvilinear in general case) are found from the system
 \begin{equation}
 X_1 \psi =\lambda_1 \psi, \quad X_2 \psi =\lambda_2 \psi, \quad  H\psi=0.
 \label{set1-4-2}
 \end{equation}
Note that if we select the local frame as
\begin{equation}
	e^\mu_a(x) = \partial_a u_\mu,\quad
	e^a_\mu(u) = \partial_{u_\mu}x^a(u) ,
	\label{ddframe}
\end{equation}
then the spinor connection (\ref{spinconn3}) is zero $(\Gamma_\nu=0)$ and the Dirac equation in curvilinear coordinates
$(u)$ is
 \begin{eqnarray}
 &&H\psi= \left\{\gamma^\nu \Big(i\partial_{u_\nu}-A_\nu(u)\Big) - m\right\}\psi=0.
\label{set1-4-2-A}
\end{eqnarray}
The solution $\psi$ of system (\ref{set1-4-2}) and the solution $\phi_C$ of the Dirac equation in Cartesian coordinates $(x)$,
 \begin{equation}
H_C\phi_C=\left\{\hat\gamma^\alpha \left(i \partial_\alpha-A_{(C)\alpha}(x)\right)-m\right\}\phi_C=0,
 \label{set1-4-12}
 \end{equation}
are bound by means of an operator $\hat S$
 \begin{equation}
\phi_C=\hat S \psi,
 \label{s-operator}
 \end{equation}
 where $A_{(C)\alpha}$ are the Cartesian components of the potential, and $\hat\gamma^\nu$ are given by (\ref{gammamatr2}).

\subsection{The set $\{ p_0, p_1 \} $ }
\label{subsect1}
For the set $\{ p_0, p_1 \} $  (see  table \ref{complete-sets-1}), in view of one-to-one correspondence  (\ref{one-to-one-1}), symmetry operators $\{ X_1,X_2 \}$ of the form
(\ref{symm-oper-zero-1-1})  can be written as
\begin{eqnarray}
\label{set1}
X_1=  \mathcal{P}_0 + \underset{1}{\varphi} + {\underset{1}{\varphi}}\,_\alpha \gamma^\alpha , \quad
X_2= \mathcal{P}_1 + \underset{2}{\varphi} + {\underset{2}{\varphi}}\,_\alpha \gamma^\alpha
\end{eqnarray}
Here, according to  (\ref{killing-1}), (\ref{phiAlphaEq}) and (\ref{de032}), we have, in Cartesian coordinates $(x^0,x^1,x^2)$, that ${\underset{1}{\xi}}\,^\nu=\delta^\nu_0$, $ {\underset{2}{\xi}}\,^\nu=\delta^\nu_1$, and
\begin{eqnarray}
&&  {\underset{1}{\varphi}}\,_\nu = -\displaystyle\frac{1}{4} s e_{\nu\mu\sigma}\delta^\mu_{0;\alpha}g^{\sigma\alpha}=0,
 \quad
 {\underset{2}{\varphi}}\,_\nu = -\displaystyle\frac{1}{4} s e_{\nu\mu\sigma}\delta^\mu_{1;\alpha}g^{\sigma\alpha}=0, \nonumber \\
&&  {\underset{1}{\varphi}}\,_{;\mu} = F_{\mu 0}, \quad  {\underset{2}{\varphi}}\,_{;\mu} = F_{\mu 1} . \label{set1-2}
 \end{eqnarray}
From (\ref{set1-2}),  in view of  Lemma \ref{lemma}, we obtain   $F_{\mu\nu}=F_{\mu\nu}(x^2)$ and
\begin{equation*}
A_\nu=A_\nu (x^2), \quad \underset{1}{\varphi} = A_0, \quad  \underset{2}{\varphi} = A_1,
 \end{equation*}
where $A_\nu (x^2)$ are arbitrary functions of $x^2$.

Then from (\ref{set1}) it follows that
\begin{equation*}
X_1= i\partial_0, \quad X_2=i \partial_1.
 \end{equation*}
Separable solutions to the Dirac equation (\ref{dirac1}) are obtained from system (\ref{set1-4-2}),
where $H=H_C $, and the eigenvalues $\lambda_1$ and $\lambda_2$ play the role of arbitrary constants of separation of variables.
 Integrating equations (\ref{set1-4-2}), we can write $\psi$ as
 \begin{equation}
\label{set1-5}
\psi= \phi_C = e^{-i\lambda_1 x^0-i\lambda_2 x^1}\tilde\psi(x^2),\quad
\tilde\psi(x^2)=
\left(
\begin{array}{c}
     \tilde\psi_1(x^2)\\
    \tilde\psi_2(x^2)\\
     \end{array}
\right).
 \end{equation}
Note that, in the case under consideration,  the Cartresian coordinates are separable  and $\psi=\phi_C$.
Substituting (\ref{set1-5}) in the Dirac equation  (\ref{set1-4-12}), we have
 \begin{gather}
\label{set1-7}
\bigg{\{}\hat\gamma^0\big(\lambda_1-A_0(x^2) \big)+ \hat\gamma^1\big(\lambda_2-A_1(x^2) \big)+\\ \nonumber
+\hat\gamma^2 \big(i\displaystyle\frac{d}{d x^2} -A_2(x^2) \big)-
m \bigg{\}}\tilde\psi(x^2)=0,
 \end{gather}
where (\ref{set1-7}) is a system of ordinary differential equations (ODEs).

\subsection{The set $\{  p_1,  p_2  \} $ }
\label{subsect2}

Separation of variables with the set $\{ p_1, p_2 \}$ (see table \ref{complete-sets-1}) is quite similar to that considered in   subsection \ref{subsect1}.

Therefore, we omit intermediate details and present the main results.

The symmetry operators $\{ X_1,X_2 \}$ corresponding to $\{ p_1, p_2 \}$, in Cartesian coordinates $(x^0,x^1,x^2)$, are
\begin{eqnarray}
\label{set2}
X_1=  \mathcal{P}_1 + \underset{1}{\varphi} + {\underset{1}{\varphi}}\,_\alpha \gamma^\alpha , \quad
X_2= \mathcal{P}_2 + \underset{2}{\varphi} + {\underset{2}{\varphi}}\,_\alpha \gamma^\alpha .
\end{eqnarray}
Here, in accordance with (\ref{killing-1}), we find   ${\underset{1}{\xi}}\,^\nu=\delta^\nu_1$,
${\underset{2}{\xi}}\,^\nu=\delta^\nu_2$. From (\ref{de032}), (\ref{phiAlphaEq}) it follows that
${\underset{1}{\varphi}}\,_\nu =$ ${\underset{2}{\varphi}}\,_\nu=0$ and ${\underset{1}{\varphi}}\,_{;\mu} = F_{\mu 1}$,  ${\underset{2}{\varphi}}\,_{;\mu} = F_{\mu 2}$. According to Lemma \ref{lemma}, we obtain $F_{\mu\nu}=F_{\mu\nu}(x^0)$, and a potential admitting  the set of symmetry operators (\ref{set2}) is
\begin{equation*}
A_\nu=A_\nu (x^0), \quad \underset{1}{\varphi} = A_1, \quad  \underset{2}{\varphi} = A_1,
 \end{equation*}
where $A_\nu=A_\nu (x^0)$ are arbitrary functions of $x^0$. From (\ref{set2}) it follows that
\begin{equation}
\label{set2-1-4}
	X_1 = i\partial_1, \quad X_2 = i\partial_2.
 \end{equation}
Separable solutions $\psi$ to the Dirac equation  obtained from  system (\ref{set1-4-2})
(where $H=H_C$) with  operators (\ref{set2-1-4}) are
 \begin{gather}
\psi=\phi_C=e^{-i\lambda_1 x^1-i\lambda_2 x^2}\tilde\psi(x^0),\quad
\tilde\psi(x^0)=
\left(
\begin{array}{c}
     \tilde\psi_1(x^0)\\
     \tilde\psi_2(x^0)\\
     \end{array}
\right), \nonumber \\ \nonumber
\label{set2-5}
\left\{\hat\gamma^0 \big(i\displaystyle\frac{d}{d x^0}-A_0(x^0)\big) + \hat\gamma^1 \big(\lambda_1-A_1(x^0) \big)+ \hat\gamma^2\big(\lambda_2-A_2(x^0) \big)
 - m \right\}\tilde\psi(x^0)=0.
 \end{gather}

 \subsection{The set $\{  p_0,  l_{21} \} $ }
\label{subsect3}
Consider the  set $\{ p_0,  l_{21} \}$ (see  table \ref{complete-sets-1}). Similar to the above cases,
in accordance with (\ref{one-to-one-1}),  we come to   symmetry operators $\{ X_1,X_2 \}$ of the form
\begin{eqnarray}
\label{set3}
X_1=  \mathcal{P}_0 + \underset{1}{\varphi} + {\underset{1}{\varphi}}\,_\alpha \gamma^\alpha , \quad
X_2= L_{21} + \underset{2}{\varphi} + {\underset{2}{\varphi}}\,_\alpha \gamma^\alpha .
\end{eqnarray}
Then we  can write  the Killing vector fields  (\ref{killing-1}) in Cartesian coordinates $(x^0,x^1,x^2)$ as $ {\underset{1}{\xi}}\,^\nu=\delta^\nu_0$, ${\underset{2}{\xi}}\,^\nu=x^1\delta^\nu_2-x^2\delta^\nu_1$, and ${\underset{j}{\varphi}}\,_\nu$,  $\underset{j}{\varphi}$ are found from equations (\ref{de032}) and (\ref{phiAlphaEq}), respectively.

To straighten out the  Killing vector field  ${\underset{2}{\xi}}^\nu$, we introduce polar coordinates $(x^0, r,\varphi )$, where
\begin{eqnarray}
\label{set3-1}
x^1=r \cos\varphi, \quad x^2=r\sin\varphi , \,\,\, \quad r=\sqrt{(x^1)^2+(x^2)^2},\quad \varphi =\arctan{\Big(\frac{x^2}{x^1}\Big)} .
\end{eqnarray}
In   coordinates (\ref{set3-1}), the components of the metric tensor are $(g_{\mu\nu})=\mbox{diag}(1,-1,-r^2)$, the Killing vectors $\underset{j}{\xi^\nu}$ take the Kronecker delta form $\underset{1}{\xi}^\nu=\delta^\nu_0$, $\underset{2}{\xi}^\nu=\delta^\nu_2$, and the non-zero  Christoffel symbols read $\Gamma^1_{22}=\Gamma^r_{\varphi\varphi}=-r$, $\Gamma^2_{12}=\Gamma^\varphi_{r\varphi}=r^{-1}$.
Let us define   local frame fields (\ref{triada1}) as
\begin{equation*}
 (e^\nu_a (x))= \big(\delta_0^\nu, \delta_1^\nu , r^{-1}\delta_2^\nu  \big),  \quad
 (e_\nu^a (x))= \big(\delta^0_\nu, \delta^1_\nu , r\delta^2_\nu  \big).
\end{equation*}
Then the spinor connection (\ref{spinconn3}) in polar coordinates takes the form
\begin{equation}
\label{set3-4}
\Gamma_0=\Gamma_1=0, \quad \Gamma_2=\displaystyle\frac{1}{4}[\hat\gamma^1,\hat\gamma^2]=
-\displaystyle\frac{i}{2} s\hat\gamma^0 = -\displaystyle\frac{i}{2} s\sigma_3.
\end{equation}
From (\ref{de032}) and (\ref{phiAlphaEq}) we have ${\underset{1}{\varphi}}\,_\nu=0$, ${\underset{2}{\varphi}}\,_\nu=-(s/2)\delta_{0\nu}$, where we have used that $e_{\mu\nu\alpha}= \sqrt{g}\epsilon_{\mu\nu\alpha}=r\epsilon_{\mu\nu\alpha}$ according to (\ref{l-c-tensor1}).
Taking into account Lemma \ref{lemma}, we find
\begin{equation}
\label{set3-8}
A_\nu=A_\nu(r), \quad {\underset{1}{\varphi}}=A_0(r), \quad \underset{2}{\varphi}=A_2(r).
\end{equation}
Here $A_\nu(r)$ are arbitrary functions of $r$.
In view of (\ref{set3-1})---(\ref{set3-8}), we can write  the symmetry operators (\ref{set3}) and the Dirac operator (\ref{dirac1}) in  polar coordinates (\ref{set3-1}) as
\begin{gather}
X_1=i\displaystyle\partial_0, \quad  X_2=i\displaystyle\partial_\phi, \label{set3-9-1}\\
H=\hat\gamma^0\Big(i\displaystyle\partial_0 - A_0(r)\Big)+\hat\gamma^1\Big( i\displaystyle\partial_r-A_1(r)\Big)+ \displaystyle\frac{1}{r}\hat\gamma^2\Big(i\displaystyle\partial_\varphi + \displaystyle\frac{s}{2} \hat\gamma^0-A_2(r) \Big) - m.
\label{set3-9}
\end{gather}

Separable solutions $\psi$ to the Dirac equation in polar coordinates  are obtained from the system (\ref{set1-4-2}) with operators (\ref{set3-9-1}) and (\ref{set3-9}) in  the form
\begin{gather}
	\label{set3-10}
	\psi=e^{-i\lambda_1 x^0-i\lambda_2 \varphi}\tilde\psi(r),\quad
	\tilde\psi(r)=
		\left(
			\begin{array}{c}
			     \tilde\psi_1(r)\\
			     \tilde\psi_2(r)\\
			\end{array}
		\right),\nonumber \\ \nonumber
\label{set3-12}
\Big\{\hat\gamma^0\big( \lambda_1 -A_0(r)\big)+\hat\gamma^1 \big(i\displaystyle\frac{d}{d r} - A_1(r)\big) +\displaystyle\frac{1}{r}\hat\gamma^2\big(\lambda_2+ \displaystyle\frac{s}{2} \hat\gamma^0-A_2(r) \big)- m\Big\}\psi(r)=0.
 \end{gather}

\textbf{Catresian coordinates}.
In Cartesian coordinates $(x^0,x^1,x^2)$, the solution $\phi_C$ satisfies the Dirac equation
(\ref{set1-4-12}) and is connected with $\psi$ from (\ref{set3-10}), according to (\ref{s-operator}), as follows:
\begin{equation*}
\phi_C=\hat S\psi=\exp\big(\displaystyle\frac{s\varphi}{2i} \hat\gamma^0\big) \psi.
\end{equation*}
The Cartesian components  $A_{(C)\alpha}$ of the potential  are
\begin{gather*}
	A_{(C) 0} =A_0(r), \quad
	A_{(C) 1} =\displaystyle\frac{x^1}{r} A_1(r)-\displaystyle\frac{x^2}{r^2} A_2(r),\quad
	A_{(C) 2} =\displaystyle\frac{x^2}{r} A_1(r)+\displaystyle\frac{x^1}{r^2} A_2(r).
\end{gather*}

\subsection{The set $\{ p_1 , l_{02} \} $ }
\label{subsect4}

Separation of variables with the  set $\{ p_1, l_{02} \}$ (see table \ref{complete-sets-1})
is similar to the set $\{ p_0, l_{21} \} $ considered in previous subsection \ref{subsect3}.

In accordance with (\ref{one-to-one-1}),  we get the  symmetry operators $\{ X_1,X_2 \}$ in Cartesian coordinates
$(x^0,x^1,x^2)$ as
\begin{eqnarray}
\label{set4}
X_1=  \mathcal{P}_1 + \underset{1}{\varphi} + {\underset{1}{\varphi}}\,_\alpha \gamma^\alpha , \quad
X_2= L_{02} + \underset{2}{\varphi} + {\underset{2}{\varphi}}\,_\alpha \gamma^\alpha .
\end{eqnarray}
The Killing vector fields (\ref{killing-1})  take the form ${\underset{1}{\xi}}\,^\nu=\delta^\nu_1$, ${\underset{2}{\xi}}\,^\nu=x^0\delta^\nu_2+x^2\delta^\nu_0$, and ${\underset{j}{\varphi}}\,_\nu$,  $\underset{j}{\varphi}$, $j=1,2$, are found from equations (\ref{de032}) and (\ref{phiAlphaEq}), respectively.

We carry out the separation of variables with  operators (\ref{set4}) in two domains: $|x^0|>|x^2|$ and  $|x^0|<|x^2|$.

\subsubsection*{The domain $|x^0|>|x^2|$.}

Let us introduce a coordinate system $(u_0,u_1,u_2)$ as
\begin{eqnarray}
\label{set4-1}
&& u_0=\sqrt{(x^0)^2 - (x^2)^2  } ,  \quad
u_2=\displaystyle\frac{1}{2}\log \left( \displaystyle\frac{x^0+x^2}{x^0-x^2}\right), \quad u_1=x^1.
\end{eqnarray}
 The inverse coordinate transformation is given by
  \begin{eqnarray}
\label{set4-1-A}
&& x^0=\varepsilon u_0 \cosh u_2, \quad x^2=\varepsilon u_0 \sinh u_2,\quad
\varepsilon=\operatorname{sign}(x^0+x^2),\quad u_0>0.
\end{eqnarray}
The Killing vectors in coordinates (\ref{set4-1}) take the Kronecker delta form $\underset{1}{\xi}^\nu=\delta^\nu_1, \quad \underset{2}{\xi}^\nu=\delta^\nu_2$.
In coordinates (\ref{set4-1}), the  components of the metric tensor $g_{\mu\nu}$ are $(g_{\mu\nu})=\mbox{diag}(1,-1,-u_0^2)$, and the non-zero Christoffel symbols take the form
\begin{equation}
	\label{crr41}
	\Gamma^0_{22}=u_0, \quad \Gamma^2_{02}=u_0^{-1}.
\end{equation}
We specify the local frame fields (\ref{triada1}) according to (\ref{ddframe}):
\begin{eqnarray*}
	&& e^\nu_0=\left(\displaystyle\partial_0 u_\nu\right) = \varepsilon\big(\cosh u_2\,\delta_0^\nu -u_0^{-1}\sinh u_2 \, \delta_2^\nu\big), \quad e^\nu_1=\left(\displaystyle\partial_1 u_\nu\right)=\delta_1^\nu, \cr
	&& e^\nu_2 =\left(\displaystyle\partial_2 u_\nu\right)=\varepsilon\big(-\sinh u_2 \,\delta_0^\nu + u_0^{-1}\cosh u_2 \, \delta_2^\nu\big).
\end{eqnarray*}
From (\ref{de032}) we have ${\underset{1}{\varphi}}\,_\nu=0$, ${\underset{2}{\varphi}}\,_\nu=-(s/2) \delta_{1\nu}$. Following  Lemma \ref{lemma}, we find the potential permitting the set of symmetry operators (\ref{set4}) as
\begin{equation}
	A_\nu=A_\nu(u_0)
	\label{pot41}
\end{equation}
and $\underset{1}\varphi=A_1$, $\underset{2}\varphi=A_2$. Here $A_\nu(u_0)$ are arbitrary functions of $u_0$.

Then we can write the symmetry operators
 (\ref{set4})
 in the form
\begin{eqnarray}
&& X_1=i\displaystyle\partial_{u_1}, \quad  X_2=i\displaystyle\partial_{u_2} - \displaystyle\frac{s}{2}\gamma^1. \label{set4-1-9}
\end{eqnarray}
The Dirac equation (\ref{dirac1}) in coordinates (\ref{set4-1}), (\ref{set4-1-A}) takes the form (\ref{set1-4-2-A}), where it is necessary to put  $\gamma^1=\hat\gamma^1$.

 Separable solutions to the Dirac equation (\ref{dirac1}) in  coordinates  (\ref{set4-1})
 are found from system (\ref{set1-4-2}), (\ref{set1-4-2-A})
with operators (\ref{set4-1-9})  and  can be written as
  \begin{gather}
	\label{set4-10}
	\psi=\frac{1}{\sqrt{u_0}}\exp\left(-i\lambda_1 u_1-i\lambda_2 u_2-i\frac{s}{2}\hat\gamma^1 u_2 \right)\tilde\psi(u_0),\quad
	\tilde\psi(u_0)=
		\left(
			\begin{array}{c}
			     \tilde\psi_1(u_0)\\
			     \tilde\psi_2(u_0)\\
			\end{array}
		\right),\\ \nonumber
	\Big\{ \varepsilon\hat\gamma^0 \Big(i\displaystyle\frac{d}{d u_0}-A_0(u_0)\Big) + \hat\gamma^1 \Big(\lambda_1- A_1(u_0)\Big)+ \\ \nonumber
	+\varepsilon u_0^{-1}\hat\gamma^2 \Big( \lambda_2 -A_2 (u_0)\Big) - m \Big\}\psi(u_0)=0.
 \end{gather}

\textbf{Catresian coordinates}.
The solution $\phi_C$ of  the Dirac equation (\ref{set1-4-12}) in Cartesian coordinates $(x^0,x^1,x^2)$  and the solution $\psi$ from (\ref{set1-4-2}) in the curvilinear separable coordinates (\ref{set4-1}), (\ref{set4-1-A}) are  $\phi_C=\psi$.

The Cartesian components  $A_{(C)\alpha}$ of the potential are
\begin{gather*}
	A_{(C) 0} =\displaystyle\frac{x^0}{u_0}A_0(u_0)-\displaystyle\frac{x^2}{u_0^2}A_2(u_0), \quad
	A_{(C) 1} =A_1(u_0),\\
	A_{(C) 2} =-\displaystyle\frac{x^2}{u_0}A_0(u_0) + \displaystyle\frac{x^0}{u_0^2}A_2(u_0).
\end{gather*}

\subsubsection*{The domain $|x^0|<|x^2|$.}
The coordinate system $(u_0,u_1,u_2)$ for this domain is as follows:
\begin{eqnarray}
\label{set4-2-1}
&& u_0=\sqrt{(x^2)^2 - (x^0)^2  }, \quad
u_2=\displaystyle\frac{1}{2}\log \left( \displaystyle\frac{x^2+x^0}{x^2-x^0}\right), \quad u_1=x^1.
\end{eqnarray}
 The inverse coordinate transformation is
  \begin{eqnarray}
\label{set4-2-A}
&& x^0=\varepsilon u_0 \sinh u_2, \quad x^2=\varepsilon u_0 \cosh u_2,\quad
\varepsilon=sign(x^2+x^0),\quad
u_0>0.
\end{eqnarray}
The Killing vectors in coordinates (\ref{set4-2-1}), (\ref{set4-2-A}) take the Kronecker delta form $\underset{1}{\xi}^\nu=\delta^\nu_1, \quad \underset{2}{\xi}^\nu=\delta^\nu_2$.
The the metric tensor $(g_{\mu\nu})$ is $(g_{\mu\nu})=\mbox{diag}(-1,-1, u_0^2)$. The non-zero  Christoffel symbols are the same as in the previous case and are given by (\ref{crr41}).
Choose  local frame fields (\ref{triada1}) in the form
\begin{eqnarray}
\label{set4-2-3}
 && e^\nu_0 =\left(\displaystyle\partial_0 u_\nu\right)=\varepsilon\big(-\sinh u_2\,\delta_0^\nu + u_0^{-1}\cosh u_2 \, \delta_2^\nu\big), \quad
    e^\nu_1=\left(\displaystyle\partial_1 u_\nu\right)=\delta_1^\nu, \cr
 && e^\nu_2 =\left(\displaystyle\partial_2 u_\nu\right)=\varepsilon\big(\cosh u_2 \, \delta_0^\nu - u_0^{-1}\sinh u_2 \, \delta_2^\nu\big).
\end{eqnarray}
Relation (\ref{pot41}) also remains valid. From (\ref{set4-2-1})---(\ref{set4-2-3}) 
we obtain that symmetry operators (\ref{set4}) and the Dirac operator in coordinates (\ref{set4-2-1}), (\ref{set4-2-A}) take the form (\ref{set4-1-9}) and (\ref{set1-4-2-A}), respectively.

Separable solutions to the Dirac equation (\ref{dirac1}) in  coordinates  (\ref{set4-2-1}), (\ref{set4-2-A}) are  given by (\ref{set4-10}), where $\tilde\psi(u_0)$ satisfies the equation
\begin{gather}
	\label{set4-2-112}
	\Big\{ \varepsilon\hat\gamma^2 \Big(i\displaystyle\frac{d}{d u_0}-A_0(u_0)\Big) + \hat\gamma^1 \Big(\lambda_1- A_1(u_0)\Big)+ \\ \nonumber
	+\varepsilon u_0^{-1}\hat\gamma^0 \Big( \lambda_2 -A_2 (u_0)\Big) - m \Big\}\psi(u_0)=0.
\end{gather}

\textbf{Catresian coordinates}.
The solution $\phi_C$ of  the Dirac equation (\ref{set1-4-12}) in Cartesian coordinates $(x^0,x^1,x^2)$  and the solution $\psi$ from (\ref{set1-4-2}) in the curvilinear separable coordinates (\ref{set4-2-1}), (\ref{set4-2-A}) are identical: $\phi_C=\psi$. The Cartesian components  $A_{(C)\alpha}$ of the potential  are
\begin{gather*}
	A_{(C) 0} =-\displaystyle\frac{x^0}{u_0}A_0(u_0) + \displaystyle\frac{x^2}{u_0^2}A_2(u_0), \quad
	A_{(C) 1} =A_1(u_0), \\ \nonumber
	A_{(C) 2} =\displaystyle\frac{x^2}{u_0}A_0(u_0) - \displaystyle\frac{x^0}{u_0^2}A_2(u_0).
\end{gather*}

\subsection{The set $\{  p_1, \frac 12( p_0 + p_2) \} $ }
\label{subsect5}

Consider now the set  $\{  p_1, \frac 12( p_0 + p_2) \}$  (see  table  \ref{complete-sets-1}).
Although this set leads to a non-orthogonal separable coordinate system,
separation of variables in the Dirac equation (\ref{dirac1}) goes like in the case of the set
$\{p_0, p_1 \}$ considered in  subsection \ref{subsect1}.
Therefore, we omit here technical details and just give the main results.

According to  (\ref{one-to-one-1}), we find the symmetry operators $\{X_1, X_2 \}$ as
\begin{eqnarray}
\label{set5}
X_1=  \mathcal{P}_1 + \underset{1}{\varphi} + {\underset{1}{\varphi}}\,_\alpha \gamma^\alpha , \quad
X_2= \displaystyle\frac{1}{2}\big(\mathcal{P}_0+\mathcal{P}_2 \big) + \underset{2}{\varphi} + {\underset{2}{\varphi}}\,_\alpha \gamma^\alpha .
\end{eqnarray}
The Killing vectors (\ref{killing-1}) are ${\underset{1}{\xi}}\,^\nu=\delta^\nu_1$, ${\underset{2}{\xi}}\,^\nu=( \delta_0^\nu+\delta^\nu_2)/2$. From (\ref{de032}) we find ${\underset{1}{\varphi}}\,_\nu = {\underset{2}{\varphi}}\,_\nu=0$. The  vector $\underset{2}\xi^\nu$ is straightened out in a coordinate system $(u_0,u_1,u_2)$,
\begin{eqnarray}
\label{set5-1}
u_0= x^0 - x^2, \quad u_1=x^1, \quad u_2=x^0+ x^2.
\end{eqnarray}
In  coordinates (\ref{set5-1}), the  components of the metric tensor are
\begin{equation*}
(g_{\mu\nu})=
\left(
\begin{array}{ccc}
0 & 0 &  2 \\
0 & -1 & 0 \\
2 & 0  & 0 \\
\end{array}
\right), \quad
(g^{\mu\nu})=
\left(
\begin{array}{ccc}
0 & 0 &  \frac 12 \\
0 & -1 & 0 \\
\frac 12 & 0  & 0 \\
\end{array}
\right).
\end{equation*}
The Killing vectors take the Kronecker delta form $\underset{1}{\xi}^\nu=\delta^\nu_1$, $\underset{2}{\xi}^\nu=\delta^\nu_2$. From  (\ref{levi-chivita1}) and (\ref{spinconn3}) it follows that
 \begin{equation}
\label{set5-2-1}
\Gamma^\nu_{\mu\alpha}=\Gamma_\nu=0.
\end{equation}
From (\ref{set5-2-1}) and  Lemma \ref{lemma}, we find a potential permitting the set of symmetry operators (\ref{set5}) in coordinates (\ref{set5-1}) as $A_\nu=A_\nu(u_0)$, where $A_\nu(u_0)$ are arbitrary functions of $u_0$. In addition, we find that $\underset{1}\varphi=A_1$ and $\underset{2}\varphi=A_2$, and then
\begin{eqnarray}
	&& X_1=i\displaystyle\partial_{u_1}, \quad
	   X_2=i\displaystyle\partial_{u_2}.
	\label{set5-44}
\end{eqnarray}

Separable solutions to the Dirac equation (\ref{dirac1}) are obtained from  system  (\ref{set1-4-2}) with operators
(\ref{set5-44}) and have the form
\begin{gather*}
\psi=e^{-i\lambda_1 u_1-i\lambda_2 u_2}\tilde\psi(u_0),\quad
\tilde\psi(u_0)=
\left(
\begin{array}{c}
     \tilde\psi_1(u_0)\\
     \tilde\psi_2(u_0)\\
     \end{array}
\right),\\ \nonumber
\bigg{\{}
		\left(\hat\gamma^0-\hat\gamma^2\right) \big(i\displaystyle\frac{d}{d u_0}-A_0(u_0) \big)
		+ \big( \hat\gamma^0+\hat\gamma^2 \big)\big(\lambda_2-A_2(u_0) \big) +\\ \nonumber
		+\hat\gamma^1\big(\lambda_1-A_1(u_0) \big) -m \bigg{\}}\tilde\psi(u_0)=0.
\end{gather*}

\textbf{Catresian coordinates}.
The solution $\phi_C$ (\ref{set1-4-12})  and the solution $\psi$ from (\ref{set1-4-2}) in the curvilinear separable coordinates (\ref{set5-1}) are identical: $\phi_C=\psi$. The Cartesian components  $A_{(C)\alpha}$ of the potential are
\begin{eqnarray*}
A_{(C) 0} =A_0(u_0)+A_2(u_0), \quad A_{(C) 1} =A_1(u_0), \quad A_{(C) 2} =-A_0(u_0)+A_2(u_0).
\end{eqnarray*}

\subsection{The set $\{ \frac 12(p_0 + p_1),  l_{21}+l_{01}+a(p_0-p_2) \} $ }
\label{subsect6}

For the set  $\{ \frac 12(p_0 + p_1),  l_{21}+l_{01}+a(p_0-p_2) \}$ (see table \ref{complete-sets-1}) in accordance with (\ref{one-to-one-1}),
the corresponding set of  symmetry operators  $\{X_1, X_2 \}$ in Cartesian coordinates is written as
\begin{gather}\label{set6ab}
	X_1 =  \displaystyle\frac{1}{2}(\mathcal{P}_0+\mathcal{P}_2) + \underset{1}{\varphi} + {\underset{1}{\varphi}}\,_\alpha \gamma^\alpha,\quad
	X_2 = L_{21}+L_{01}+ a(\mathcal{P}_0-\mathcal{P}_2)+ \underset{2}{\varphi} + {\underset{2}{\varphi}}\,_\alpha \gamma^\alpha.
\end{gather}
The corresponding  Killing vectors (\ref{killing-1}) are
\begin{gather}\label{set6cd}
	\underset{1}\xi^\nu = \displaystyle\frac{1}{2}(\delta_0^\nu+\delta_2^\nu), \quad
	\underset{2}\xi^\nu = (x^1+a)\delta_0^\nu+(x^0-x^2)\delta_1^\nu+(x^1-a)\delta_2^\nu.
\end{gather}
The Killing vectors (\ref{set6cd}) can be straightened out in a coordinate system $(u_0,u_1,u_2)$:
\begin{eqnarray}
\label{set6-1}
&& u_0=\displaystyle\frac{1}{2}( x^0 - x^2)^2 -2 a x^1, \cr
&& u_1=x^0+x^2 +\displaystyle\frac{1}{a}(x^0-x^2)\left[\displaystyle\frac{(x^0-x^2)^2}{6a} -x^1\right], \\
&& u_2= \displaystyle\frac{1}{2a}( x^0 - x^2).\nonumber
\end{eqnarray}
The inverse coordinate transformation is
\begin{eqnarray}
\label{set6-1-A}
&& x^0=\displaystyle\frac{1}{2} u_1 - \displaystyle\frac{1}{2a}u_0 u_2 +\displaystyle\frac{1}{3} au_2^3 +a u_2, \cr
&& x^1= a u_2^2 -\displaystyle\frac{1}{2a}u_0, \\
&& x^2=\displaystyle\frac{1}{2} u_1-\displaystyle\frac{1}{2a} u_0 u_2 + \displaystyle\frac{1}{3} u_2^3 -a u_2.\nonumber
\end{eqnarray}
Here we assume that $a\neq 0$. In  coordinates (\ref{set6-1}), (\ref{set6-1-A})  the metric tensor takes the form
\begin{equation*}
(g^{\mu\nu})=
\left(
\begin{array}{ccc}
-4 a^2 & 0 &  0 \\
0 & \displaystyle\frac{2 u_0}{a^2} & \displaystyle\frac{1}{a} \vspace{1mm}\\
0 & \displaystyle\frac{1}{a}  & 0 \\
\end{array}
\right), \quad
(g_{\mu\nu})=
\left(
\begin{array}{ccc}
-\displaystyle\frac{1}{4 a^2} & 0 & 0 \\
0 & 0 & a \\
0 & a  & -2 u_0 \\
\end{array}
\right).
\end{equation*}
The Killing vectors (\ref{set6cd}) take the Kronecker delta form $\underset{1}{\xi}^\nu=\delta^\nu_1$, $\underset{2}{\xi}^\nu=\delta^\nu_2$. The the non-zero  Christoffel symbols are $\Gamma^0_{22}=-4 a^2$, $\Gamma^1_{20}=-1/a$. Choose  local frame fields $(e^\nu_a (x))$ (\ref{triada1}) as
\begin{eqnarray}
\label{set6-3}
 &&(e^\nu_0) =\left(\displaystyle \partial_0 u_\nu \right)=\left(2au_2, 1+\displaystyle\frac{1}{2 a^2}u_0+u_2^2,
 \displaystyle\frac{1}{2a}\right), \cr
 && (e^\nu_1) =\left(\displaystyle \partial_1 u_\nu \right)= (-2a, -2 u_2,0), \cr
 && (e^\nu_2) =\left(\displaystyle\partial_2 u_\nu\right)= \left(-2au_2, 1-\displaystyle\frac{1}{2 a^2}u_0 -u_2^2,-\displaystyle\frac{1}{2a} \right).
     \end{eqnarray}
From (\ref{set6-1}) --- (\ref{set6-3}) and  Lemma \ref{lemma}, we find a potential permitting the set of symmetry operators (\ref{set6ab}) in coordinates (\ref{set6-1}), (\ref{set6-1-A}) as $A_\nu=A_\nu(u_0)$, where $A_\nu(u_0)$ are arbitrary functions of $u_0$. In addition, we find that $\underset{1}\varphi=A_1$ and $\underset{2}\varphi=A_2$, and from (\ref{de032}) it follows that $\underset{1}{\varphi}\,_\nu=0$, $\underset{2}{\varphi}\,_\nu=-as\delta_{2\nu}$. Then the symmetry operators (\ref{set6ab}) can be written as
\begin{eqnarray}
	&& X_1=i\displaystyle\partial_{u_1}, \quad
	   X_2=i\displaystyle\partial_{u_2}-a s \gamma^2,
\label{set6-15}
 \end{eqnarray}
and the Dirac equation (\ref{dirac1}) in coordinates (\ref{set6-1}), (\ref{set6-1-A}) takes the form (\ref{set1-4-2-A}), where it is necessary to put
\begin{gather*}
	\gamma^0 = 2 a u_2(\hat\gamma^0-\hat\gamma^2)-2a\hat\gamma^1, \quad
	\gamma^2 = \displaystyle\frac{1}{2a}(\hat\gamma^0-\hat\gamma^2),\\ \nonumber
  	\gamma^1 = \hat\gamma^0+\hat\gamma^2+\big(  \displaystyle\frac{1}{2a^2}u_0+u_2^2 \big)(\hat\gamma^0-\hat\gamma^2)-2u_2\hat\gamma^1.
\end{gather*}

Separable solutions to the Dirac equation (\ref{dirac1}) in  coordinates  (\ref{set6-1}), (\ref{set6-1-A}) are found from system (\ref{set1-4-2}), (\ref{set1-4-2-A}) with operators (\ref{set6-15}) and can be written as
\begin{gather*}
	\psi = \exp\left(-i\lambda_1 u_1-i\lambda_2 u_2-i\frac{s}{2}(\hat\gamma^0-\hat\gamma^2) u_2 \right)\tilde\psi(u_0),\quad
	\tilde\psi(u_0)=
		\left(
			\begin{array}{c}
			     \tilde\psi_1(u_0)\\
			     \tilde\psi_2(u_0)\\
		        \end{array}
		\right),\\
	\bigg\{ -2a\hat\gamma^1 \left( i\displaystyle\frac{\partial}{\partial u_0}-A_0(u_0)\right)+
	\displaystyle\frac{1}{2a}\big(\hat\gamma^0-\hat\gamma^2\big)\big(\lambda_2-A_2(u_0)\big)+ \\ \nonumber
 	+ \Big(\hat\gamma^0+\hat\gamma^2 +\displaystyle\frac{u_0 }{2a^2}\big(\hat\gamma^0-\hat\gamma^2\big)\Big) \big(\lambda_1-A_1(u_0)\big)-m \bigg\}\tilde\psi(u_0)=0.
\end{gather*}

\textbf{Catresian coordinates}.
The solution $\phi_C$ (\ref{set1-4-12})  and the solution $\psi$ from (\ref{set1-4-2-A})  in the curvilinear separable coordinates (\ref{set6-1}), (\ref{set6-1-A})  are identical: $\phi_C=\psi$. The Cartesian components  $A_{(C)\alpha}$ of the potential are
\begin{eqnarray*}
&&A_{(C) 0} =(x^0-x^2)A_0(u_0)+\big(1-\displaystyle\frac{1}{a}x^1+\displaystyle\frac{1}{2a^2}(x^0-x^2)^2 \big) A_1(u_0)+ \displaystyle\frac{1}{2a}A_2 (u_0), \cr
&& A_{(C) 1} =-2a A_0(u_0)- \displaystyle\frac{1}{a}(x^0-x^2) A_1(u_0), \\
&& A_{(C) 2} =-(x^0-x^2)A_0(u_0)+\big(1+\displaystyle\frac{1}{a}x^1-\displaystyle\frac{1}{2a^2}(x^0-x^2)^2 \big) A_1(u_0)- \displaystyle\frac{1}{2a}A_2(u_0). \nonumber
 \end{eqnarray*}

\subsection{The set $\{ \frac 12(p_0 + p_1),  l_{21}+l_{01} \} $ }
\label{subsect7}

Now we consider  the set  $\{ \frac 12(p_0 + p_1),  l_{21}+l_{01}+a(p_0-p_2) \}$  (see table  \ref{complete-sets-1})
when $a=0$. In accordance with (\ref{one-to-one-1}), the corresponding set of  symmetry operators  $\{X_1, X_2 \}$ in Cartesian coordinates can be written as
\begin{eqnarray}\label{set7ab}
	X_1 =  \displaystyle\frac{1}{2}(\mathcal{P}_0+\mathcal{P}_2) + \underset{1}{\varphi} + {\underset{1}{\varphi}}\,_\alpha \gamma^\alpha,\quad
	X_2 = L_{21}+L_{01}+ \underset{2}{\varphi} + {\underset{2}{\varphi}}\,_\alpha \gamma^\alpha.
\end{eqnarray}
The corresponding  Killing vectors (\ref{killing-1}) are
\begin{eqnarray}\label{set7cd}
	\underset{1}\xi^\nu =  \displaystyle\frac{1}{2}(\delta_0^\nu+\delta_2^\nu), \quad
	\underset{2}\xi^\nu= x^1\delta_0^\nu+(x^0-x^2)\delta_1^\nu+x^1\delta_2^\nu.
\end{eqnarray}
The Killing vectors (\ref{set7cd}) can be straightened out in a coordinate system $(u_0,u_1,u_2)$:
\begin{eqnarray}
\label{set7-1}
&& u_0= x^0 - x^2, \quad  u_1=x^0+x^2 -\displaystyle\frac{(x^1)^2}{x^0-x^1}, \quad u_2= \displaystyle\frac{x^1}{x^0-x^2}.
\end{eqnarray}
 The inverse coordinate transformation is
\begin{equation}
	\label{set7-1-A}
	x^0=\displaystyle\frac{1}{2} (u_0+u_1+u_0 u_2^2),\quad
	x^1 = u_0 u_2, \quad
	x^2 = x^0 = \displaystyle\frac{1}{2} (-u_0+u_1+u_0 u_2^2).
\end{equation}
In  coordinates (\ref{set7-1}), the metric tensor is
\begin{equation*}
(g^{\mu\nu})=
\left(
\begin{array}{ccc}
0 & 2 &  0 \\
2 & 0 & 0 \vspace{1mm}\\
0 & 0 & -\displaystyle\frac{1}{u_0^2} \\
\end{array}
\right), \quad
(g_{\mu\nu})=
\left(
\begin{array}{ccc}
0 & \displaystyle\frac{1}{2} & 0 \\
\displaystyle\frac{1}{2} & 0 & 0 \\
0 & 0  & - u_0^2 \\
\end{array}
\right),\quad
g=\mbox{det}(g_{\nu\mu})=\displaystyle\frac{1}{4}u_0^2.
\end{equation*}
The Killing vectors (\ref{set7cd}) take the Kronecker delta form $\underset{1}{\xi}^\nu=\delta^\nu_1$, $\underset{2}{\xi}^\nu=\delta^\nu_2$. The the non-zero  Christoffel symbols are $\Gamma^1_{22}=2 u_0$, $\Gamma^2_{20}=1/u_0$. Choose local frame fields $(e^\nu_a (x))$
(\ref{triada1}) as
\begin{eqnarray*}
 &&(e^\nu_0) =\left(\displaystyle\partial_0 u_\nu \right)=\left(1, 1+u_2^2, - \displaystyle\frac{u_2}{u_0}\right), \cr
 && (e^\nu_1) =\left(\displaystyle\partial_1 u_\nu \right)= (0, -2u_2,\displaystyle\frac{1}{u_0}), \cr
 && (e^\nu_2) =\left(\displaystyle\partial_2 u_\nu\right)= \left(-1, 1-u_2^2,\displaystyle\frac{u_2}{u_0} \right).
\end{eqnarray*}

From (\ref{de032}) and  Lemma \ref{lemma}, we find a potential permitting the set of symmetry operators (\ref{set7ab}) in coordinates (\ref{set7-1}), (\ref{set7-1-A}) as $A_\nu=A_\nu(u_0)$, where $A_\nu(u_0)$ are arbitrary functions. In addition, we have  $\underset{1}\varphi=A_1$,  $\underset{2}\varphi=A_2$, and from (\ref{de032}) it follows that $\underset{1}{\varphi}\,_\nu=0$, $\underset{2}{\varphi}\,_\nu=-(s/2)\delta_{0\nu}$.

Then the symmetry operators  (\ref{set7ab}) can be written as
\begin{eqnarray}
	&& X_1=i\displaystyle\partial_{u_1}, \quad
	   X_2=i\displaystyle\partial_{u_2}-\displaystyle\frac{s}{2}s\gamma^0,
\label{set7-15}
 \end{eqnarray}
  and the Dirac equation (\ref{dirac1}) in coordinates (\ref{set7-1}), (\ref{set7-1-A}) takes
the form (\ref{set1-4-2-A}), where it is necessary to put
\begin{eqnarray*}
\gamma^0=\hat\gamma^0-\hat\gamma^2, \quad \gamma^1=\hat\gamma^0+\hat\gamma^2-2u_2 \gamma^1 +u_2^2 (\hat\gamma^0-\hat\gamma^2), \quad
\gamma^2=-\displaystyle\frac{u_2}{u_0}(\hat\gamma^0-\hat\gamma^2) +\displaystyle\frac{1}{u_0}\hat\gamma^1.
\end{eqnarray*}

Separable solutions to the Dirac equation (\ref{dirac1}) in  coordinates  (\ref{set7-1}), (\ref{set7-1-A})
 are found from system (\ref{set1-4-2}),  (\ref{set1-4-2-A}) with operators (\ref{set7-15}) and
 can be written as
\begin{gather*}
	\psi = \exp\left(-i\lambda_1 u_1-i\lambda_2 u_2-i\frac{s}{2}(\hat\gamma^0-\hat\gamma^2)u_2 \right)\tilde\psi(u_0),\quad
	\tilde\psi(u_0)=
		\left(
			\begin{array}{c}
			     \tilde\psi_1(u_0)\\
			     \tilde\psi_2(u_0)\\
			\end{array}
		\right),\\
	\bigg\{\big(\hat\gamma^0-\hat\gamma^2\big) \left( i\displaystyle\frac{d}{d u_0}+\displaystyle\frac{i}{2 u_0}-A_0(u_0)\right)+ \big(\hat\gamma^0 + \hat\gamma^2\big)\big(\lambda_1-A_1(u_0)\big)+ \\ \nonumber
	+ \displaystyle\frac{1}{u_0}\hat\gamma^1\big(\lambda_2-A_2(u_0)\big)- m\bigg\}\tilde\psi(u_0)=0.
\end{gather*}

\textbf{Catresian coordinates}.
The solution $\phi_C$ (\ref{set1-4-12})  and the solution $\psi$ from (\ref{set1-4-2-A}) in the curvilinear separable coordinates (\ref{set7-1}), (\ref{set7-1-A})  are identical: $\phi_C=\psi$. The Cartesian components  $A_{(C)\alpha}$ of the potential are
\begin{eqnarray*}
&&A_{(C) 0} =A_0(u_0)+\Big(1+\displaystyle\frac{(x^1)^2}{(x^0-x^2)^2}\Big)A_1(u_0)-\displaystyle\frac{x_1}{(x^0-x^2)^2} A_2(u_0), \cr
&& A_{(C) 1} =- A_0(u_0)+ \Big(1-\displaystyle\frac{(x^1)^2}{(x^0-x^2)^2}\Big)A_1(u_0)+\displaystyle\frac{x_1}{(x^0-x^2)^2} A_2(u_0), \cr
&& A_{(C) 2} = -\displaystyle\frac{2 x^1}{x^0-x^2}A_1(u_0)+\displaystyle\frac{1}{x^0-x^2}A_2(u_0).
 \end{eqnarray*}

\section{Conclusion}
\label{concrem}

We have considered the problem of separation of variables in the  $(2+1)$-dimensional Dirac equation with external electromagnetic potential in Minkowski spacetime with the use of first-order differential symmetry operators with the derivatives having matrix coefficients.

To this end we explored the properties of  symmetry operators in   $(2+1)$-dimensional pseudo-Riemannian space in terms of the determining equations for the 
operator.

The symmetry operators under consideration which are analytical (holomorphic) with respect to the mass parameter $m$ entering into the Dirac equation, are shown to have different properties.
 In a massive case ($m\neq 0$), a symmetry operator commutes with the operator of the Dirac equation and has scalar (non-matrix) coefficients of the derivatives. The complete set of symmetry operators provides separation of variables in the Dirac equation in ($2+1$)-dimensional Minkowski spacetime like in the ($3+1$)-dimensional case  \cite{bagr_git_book2014}. The compete sets have been classified and for each of them,  separation of variables has been carried out.

In the curved pseudo-Riemann $(2+1)$-dimensional space, as well as in the $(3+1)$-dimensional  space, the problem of separation of variables in the Dirac equation  has specific features as compared to that in the plane Minkowski space and it is to be a subject of separate study.

From the point of view of developing a theory of separation of variables  in  the Dirac equation, the $(2+1)$-dimensional case is especially attractive as it is mathematically simpler but includes all basic elements of the theory. Unlike $(3+1)$-dimensional space, in a massive case, the symmetry operators for the $(2+1)$-dimensional Dirac equation are presented in terms of the Killing vectors, and  the spin operators with matrix coefficients of  the derivatives can be removed from the symmetry operators without loss of generality.

In a massless case ($m=0$), the commutator (\ref{determ-eq1}) of the symmetry operator with the $(2+1)$-dimensional Dirac operator $H$ in (\ref{dirac1}) is proportional to the Dirac operator with a Lagrange multiplier having the form of a scalar function. The symmetry operator is presented in terms of a conformal Killing vector field $\xi^\mu$. Therefore, the set of symmetry operators is wider in this case and the problem of separation of variables calls for a particular research.

In summary, we note that to use symmetry operators of both  the first and the second order for separation of variables in the Dirac equation, one can turn to the squared Dirac equation \cite{bagr92}. However, this problem is beyond the scope of the present work.

\section*{Acknowledgements}
The work was supported in part by Tomsk State University under the International Competitiveness
Improvement Program
and by Tomsk Polytechnic University under the International Competitiveness
Improvement Program.


\end{document}